\ifpdf \usepackage[pdftex]{graphicx} \pdfcompresslevel=9
\else \usepackage[dvips]{graphicx} \fi
\newcommand{\M}{\mathcal{M}}
\newcommand{\N}{\mathcal{N}}
\renewcommand{\d}{\mathrm{d}}
\newcommand{\A}{\mathbf{A}}
\newcommand{\B}{\mathbf{B}}
\newcommand{\C}{\mathbf{C}}
\renewcommand{\S}{\mathbf{S}}
\newcommand{\W}{\mathbf{W}}
\renewcommand{\L}{\mathbf{L}}
\title[Functional Maps Representation on Product Manifolds]%
      {Functional Maps Representation on Product Manifolds}
\newcommand{\rev}[1]{{\color{black}{#1}}}
\newcommand{\final}[1]{{#1}}
\newlength\figureheight
\newlength\figurewidth
\newtheorem{theorem}{Theorem}
\author[E. Rodol\`{a} et al.]
{\parbox{\textwidth}{\centering
E. Rodol\`{a}$^{1}$ ~~~ Z. L{\"a}hner$^{2}$ ~~~ A. M. Bronstein$^{3,6}$ ~~~ M. M. Bronstein$^{4,5,6}$ ~~~ J. Solomon$^{7}$\\
rodola@di.uniroma1.it ~~
         laehner@in.tum.de ~~
         bron@cs.technion.ac.il ~~
	  bronstein@imperial.ac.uk ~~
         jsolomon@mit.edu
        }
        \\
{\parbox{\textwidth}{\centering
         $^1$Sapienza University of Rome ~~~~~
         $^2$TU Munich ~~~~~
         $^3$Technion ~~~~~
	$^4$USI Lugano ~~~~~
	$^5$Imperial College London ~~~~~
         $^6$Intel ~~~~~
         $^7$MIT
  }
}
}
\begin{document}


\maketitle

\begin{abstract}
We consider the tasks of representing, analyzing and manipulating maps between shapes. We model maps as densities over the product manifold of the input shapes; these densities can be treated as scalar functions and therefore are manipulable using the language of signal processing on manifolds. Being a manifold itself, the product space endows the set of maps with a geometry of its own, which we exploit to define map operations in the spectral domain; we also derive relationships with other existing representations (soft maps and functional maps). To apply these ideas in practice, we discretize product manifolds and their Laplace--Beltrami operators, and we introduce localized spectral analysis of the product manifold as a novel tool for map processing. Our framework applies to maps defined between and across 2D and 3D shapes without requiring special adjustment, and it can be implemented efficiently with simple operations on sparse matrices.

\vspace{1ex}
{\em
Keywords: shape matching, functional maps, product manifolds
}

\begin{classification} 
\CCScat{Computer Graphics}{I.3.5}{Computational Geometry and Object Modeling}{Shape Analysis, 3D Shape Matching, Geometric Modeling}
\end{classification}
\end{abstract}

\section{Introduction}

3D acquisition continues to reach new levels of sophistication and is rapidly being incorporated into commercial products ranging from the Microsoft Kinect for gaming to LIDAR for autonomous cars and MRI for medical imaging. An essential building block for application design in many of these domains is fast and reliable recovery of 3D shape correspondences. This problem arises in applications as diverse as character animation, 3D avatars, pose and style transfer, or texture mapping, to mention a few.

A modern theme in shape correspondence involves the \emph{representation} of a map from one shape to another.  While the most obvious representation maintains pairs of source and target points, this is by no means the only option.  \
Our paper is mainly related to two frameworks developed for establishing correspondence between shapes: optimization on {\em product manifolds} and {\em functional maps}. 




The first class of methods represents the correspondence on the Cartesian product of the two shapes. First methods of this type were formulated using graph matching \cite{zeng2010dense}. 
Windheuser et al.\ optimize in a product space \cite{windheuser2011geometrically}, preserving important differential geometric properties. A similar approach was applied in \cite{lahner2016efficient} for 2D-to-3D matching. 
In \cite{vestner2017product}, correspondence is formulated as kernel density estimation on the product manifold, interpreted as an alternating diffusion-sharpening process in \cite{vestnerefficient}. 
A product between more than two shapes is considered in~\cite{multiway}, but the resulting optimization problem is restricted to yield only sparse correspondences.

Soft maps~\cite{solomon2012soft} represent correspondence between shapes as a distribution on the product manifold with prescribed marginals reflecting area preservation.  Nonconvex objectives can be used to incorporate metric information into optimization for soft maps~\cite{memoli2011gromov,solomon2016entropic}, while other objectives on soft maps can be understood as probabilistic relaxations of classical distortion measures from differential geometry~\cite{solomon2013dirichlet,mandad2017variance}. These methods suffer from high complexity, usually quadratic in the number of shape vertices.  

Functional maps~\cite{ovsjanikov2016computing} abandon pointwise correspondence, instead modeling correspondences as linear operators between spaces of functions. An approximation of such operators in a pair of truncated orthogonal bases 
dramatically reduces the problem complexity.  
%
One of the key innovations of this framework is allowing to bring 
a new set of algebraic methods into the domain of shape correspondence. 
Several follow-up works tried to improve the framework by employing sparsity-based priors~\cite{pokrass2013sparse}, manifold optimization~\cite{kovnatsky2013coupled,kovnatsky2016madmm}, non-orthogonal~\cite{kovnatsky2015functional} or localized \cite{choukroun2016elliptic,lmh} bases, 
coupled optimization over the forward and inverse  maps~\cite{eynard2016coupled,ezuz2017deblurring,huang2017adjoint}, combination of functional maps with metric-based approaches \cite{aflalo2016spectral,shamai2016geodesic}\rev{, and kernelization~\cite{wang2018kernel}.} Recent works of~\cite{commutativity,nognengimproved} considered functional algebra (function point-wise multiplications together with addition).   
Generalizations addressing the settings of multiple \rev{shapes}~\cite{huang2014functional,kovnatsky2016madmm}, partial \rev{correspondence}~\cite{rodola2017partial,litany2016non}, \rev{and} cluttered correspondence \cite{cosmo2016matching} have been proposed as well. 
Most recently, functional maps have also been used in conjunction with intrinsic deep learning methods~\cite{litany2017deep}. 
For a comprehensive survey of functional maps and related techniques, we refer the reader to~\cite{ovsjanikov2016computing}.

\paragraph*{Motivation and contribution.} In this paper, we advocate posing correspondence---and understanding relationships between the existing representations above---in terms of functions on the product manifold of the source and target.  
A motivating observation is that functional maps approximate a distribution representing the correspondence in the product space as a linear combination of \emph{separable} tensor-product basis functions. This distribution, however, is supported on a manifold with a dimension \emph{lower} than that of the product space:  For a pair of two dimensional shapes, the distribution is supported on a two-dimensional manifold embedded in a four-dimensional space. Consequently, most of the support of the basis functions is wasted on ``empty'' regions of the product space. Localized bases on the individual domains improve this situation, but still most of their support is wasted. 

We show how point-to-point maps, functional maps, and soft maps all can be understood as (signed) measures on the product and how these representations might be converted to one another.  More importantly, this viewpoint suggests new techniques to \rev{{\em represent} and {\em approximate}} mappings directly on the product, e.g.\ by building a basis from eigenfunctions of the product Laplace--Beltrami operator potentially after filtering undesirable matches.

Our theoretical contributions have practical bearing on the design of correspondence techniques.  After discretizing product manifolds and their Laplace--Beltrami operators, we consider map design and processing problems among two- and three-dimensional shapes.  Reasoning about the product manifold leads to compact, understandable bases for map design that focus resolution in the part of the product most relevant to a correspondence task. One of such means is the construction of \emph{inseparable} bases. To this end, we propose to compute localized harmonics on the product manifold, and discuss a numerical scheme that keeps the complexity of such a computation \rev{feasible and, in particular cases,} comparable to that of the construction of a separable localized basis. 
%
\rev{We finally showcase our framework applied to the task of map refinement.}

\section{Background}\label{sec:background}


\paragraph*{Manifolds.}
We model shapes as Riemannian $d$-manifolds $(\M,g_\M)$ (possibly with boundary $\partial\M$) equipped with  area \rev{elements} $\d x$ induced by the standard metric $g_\M$; we do not restrict our focus to surfaces but rather allow $\M$ and $\N$ to have different intrinsic \rev{dimensions}. We denote by $T_x\M$ the tangent plane at $x\in\M$, modeling the manifold locally as a Euclidean space. Given two scalar functions $f,g : \M\to\mathbb{R}$ belonging to an appropriate functional space $\mathcal{F}(\M)$, we use the standard manifold inner product $\langle f,g \rangle_\M = \int_\M f(x)g(x)\,\d x$. 

In analogy to the Laplace operator in flat spaces, the positive semidefinite Laplace--Beltrami (LB) operator $\Delta_\M$ equips us with the tools to extend Fourier analysis to manifolds. The manifold Laplacian admits an eigen-decomposition $\Delta_\M \phi_i = \lambda_i \phi_i$ for $i \ge 1$, with real eigenvalues $0=\lambda_1\le\lambda_2\le\dots$ and eigenfunctions $\{ \phi_i \}_{i\ge 1}$ forming an orthonormal basis of $L^2(\M)=\{f : \M\to\mathbb{R}~|~\langle f,f\rangle_\M <\infty\}$. Any function $f\in L^2(\M)$ can thus be represented via the Fourier-like series expansion
\begin{equation}
f(x) = \sum_{i\ge 1} \langle f, \phi_i \rangle_\M \phi_i(x)\,.
\end{equation}

\paragraph*{Product manifolds.}
Given two \final{Riemannian} manifolds $(\M,g_\M),(\N,g_\N)$ of dimension $d_\M$ and $d_\N$ \final{with metric tensors $g_\M, g_\N$}, respectively, their product $(\M\times\N,g_\M \oplus g_\N)$ is a manifold of dimension $d_\M+d_\N$, where $g_\M \oplus g_\N = \left(\begin{smallmatrix}g_\M&0\\0 & g_\N\end{smallmatrix}\right)$ is the direct sum of the individual metric tensors \cite{guillemin2010differential}, inducing the area element $\d a = \d x\, \d y$. By this definition of product, to each point $(x,y)\in\M\times\N$ is attached a tangent space derived by the canonical isomorphism $T_{(x,y)}\M \times \N = T_x\M \times T_y\N$ (see \cite[ex.\ 8.7]{tu11}). For tangent vectors $\xi,\eta\in T_x\M$ and $\zeta,\mu\in T_y\N$, the inner product \final{$\langle \cdot, \cdot \rangle$} of $(\xi,\zeta),(\eta,\mu) \in T_{(x,y)}\M \times \N$ is given by
\begin{equation}
\langle (\xi,\zeta) , (\eta,\mu) \rangle_{T_{(x,y)}\M\times \N} = \langle \xi,\eta \rangle_{T_x\M} + \langle \zeta,\mu \rangle_{T_y\N}\,.
\end{equation}
%

Now let $f\in \mathcal{F}(\M)$, $g\in \mathcal{F}(\N)$ for some functional space $\mathcal{F}$, and denote by $f \wedge g$ the outer product of $f$ and $g$ defined by the mapping 
\begin{equation}
f\wedge g : (x,y) \mapsto f(x)g(y)\,.
\end{equation}
The LB operator $\Delta_{\M\times \N}$ obeys the (outer) product rule identity \cite{chavel84}:
\begin{align}
\Delta_{\M\times \N} (f \wedge g) = (\Delta_\M f) \wedge g + f \wedge (\Delta_\N g)\,.
\end{align}
Given 
\rev{eigenvectors $(\phi,\psi)$ with corresponding eigenvalues $(\alpha,\beta)$ satisfying }
$\Delta_\M \phi = \alpha \phi$ and $\Delta_\N \psi = \beta \psi$, application of the product rule yields
\begin{align}\label{eq:prodlap}
\Delta_{\M\times \N} (\phi \wedge \psi) &= (\Delta_\M \phi) \wedge \psi + \phi \wedge (\Delta_\N \psi)\nonumber\\
&=(\alpha + \beta) (\phi \wedge \psi)\,.
\end{align}
This observation leads to a characterization of LB eigenvalues for product manifolds:

\begin{theorem}[\!{\!\cite[Proposition A.II.3]{berger71}}]
Let $\xi$ be an eigenfunction of the product LB operator $\Delta_{\M\times \N}$ with the corresponding eigenvalue $\gamma$. Then, there exist some eigenfunctions $\phi$ of $\Delta_\M$ and $\psi$  of $\Delta_\N$ with the eigenvalues $\alpha$ and $\beta$, respectively, such that $\xi = \phi \wedge \psi$ and $\gamma = \alpha+\beta$. \label{thrm:prodevecs}
\end{theorem}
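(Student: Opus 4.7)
The plan is to combine the forward direction already established in \eqref{eq:prodlap} with the completeness of the tensor-product basis in $L^2(\M\times\N)$. Assuming $\M$ and $\N$ are compact (possibly with boundary, under suitable self-adjoint boundary conditions), the spectral theorem supplies orthonormal eigenbases $\{\phi_i\}_{i\ge 1}$ and $\{\psi_j\}_{j\ge 1}$ of $L^2(\M)$ and $L^2(\N)$ with corresponding eigenvalues $\{\alpha_i\}$ and $\{\beta_j\}$.

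First I would check that $\{\phi_i\wedge\psi_j\}_{i,j}$ is an orthonormal basis of $L^2(\M\times\N)$. Orthonormality is immediate from Fubini's theorem applied to the product area element $\d a = \d x\,\d y$: $\langle \phi_i\wedge\psi_j,\phi_k\wedge\psi_\ell\rangle_{\M\times\N} = \langle \phi_i,\phi_k\rangle_\M\,\langle \psi_j,\psi_\ell\rangle_\N = \delta_{ik}\delta_{j\ell}$. Completeness reduces to the canonical identification of $L^2(\M\times\N)$ with the (completed) Hilbert tensor product $L^2(\M)\otimes L^2(\N)$: this is a standard measure-theoretic fact, proved by approximating any function in $L^2(\M\times\N)$ first by simple functions supported on measurable rectangles and then expanding each factor in its own eigenbasis.

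Next, by the product-rule identity \eqref{eq:prodlap}, each $\phi_i\wedge\psi_j$ is itself an eigenfunction of $\Delta_{\M\times\N}$ with eigenvalue $\alpha_i+\beta_j$. Hence the tensor-product basis already diagonalizes $\Delta_{\M\times\N}$, so the spectrum of $\Delta_{\M\times\N}$ is exactly $\{\alpha_i+\beta_j\}_{i,j\ge 1}$. Any eigenfunction $\xi$ with eigenvalue $\gamma$ therefore lies in $\overline{\mathrm{span}}\{\phi_i\wedge\psi_j:\alpha_i+\beta_j=\gamma\}$, and this span is finite-dimensional by discreteness of the spectrum. Consequently $\xi$ decomposes as a finite linear combination of outer products whose eigenvalue pairs sum to $\gamma$, and the $\gamma$-eigenspace admits a basis consisting of pure tensors $\phi\wedge\psi$ of the required form.

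The main obstacle is the completeness step; everything else is routine bookkeeping around \eqref{eq:prodlap}. A small interpretive caveat is that when the $\gamma$-eigenspace has multiplicity greater than one, the statement ``$\xi=\phi\wedge\psi$'' is most naturally read as asserting that the eigenspace is \emph{spanned} by such pure tensors, which is exactly what the argument above delivers; individual eigenfunctions may of course be linear combinations of several such products whenever $\gamma$ has more than one admissible decomposition $\alpha+\beta$.
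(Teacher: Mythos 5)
The paper does not actually prove this statement---it is imported verbatim from Berger, Gauduchon and Mazet (the cited Proposition A.II.3)---so there is no in-paper argument to compare yours against. Your proof is the standard one and is sound: orthonormality of $\{\phi_i\wedge\psi_j\}$ via Fubini, completeness via $L^2(\M\times\N)\cong L^2(\M)\mathbin{\hat\otimes} L^2(\N)$, the product rule \eqref{eq:prodlap} to show each pure tensor is an eigenfunction, and then the self-adjointness/orthogonality argument to conclude that any $\gamma$-eigenfunction has vanishing coefficients against every $\phi_i\wedge\psi_j$ with $\alpha_i+\beta_j\neq\gamma$, the surviving index set being finite since both eigenvalue sequences tend to infinity. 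Your closing caveat is also the correct reading of the cited proposition: when $\gamma$ admits several decompositions $\alpha_i+\beta_j$ with distinct $\alpha_i$, a $\gamma$-eigenfunction need not itself be a pure tensor, and the honest statement is that the $\gamma$-eigenspace is \emph{spanned} by pure tensors. This reading is in fact the one the paper relies on implicitly when it asserts, after \eqref{eq:ok}, that each $\xi_\ell$ ``uniquely identifies'' a pair $(\phi_i,\psi_j)$---that presumes one has \emph{chosen} the separable basis of each eigenspace, which your argument justifies.
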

%

It is also easy to check that the set of eigenfunctions $\{\phi_i \wedge \psi_j\}_{i,j}$ is orthogonal, since:
\begin{align}
\int_{\M\times \N} (\phi_i \wedge \psi_j) &(\phi_k \wedge \psi_\ell)\,\d a 
= \int_{\M\times \N} \phi_i(x)  \psi_j(y) \phi_k(x)   \psi_\ell (y)\,\d a\nonumber\\
&\rev{=} \int_\M \phi_i \phi_k \d x \int_\N \psi_j \psi_\ell \d y\\
&= \delta_{ik} \delta_{j\ell} 
= \left\{ 
		\begin{array}{ll}
			1    & (i = k)\textrm{ and }(j=\ell); \\ 
			0 & \mathrm{otherwise,}
		\end{array}
\right.
\end{align}
where $\delta_{ij}$ is the Kronecker delta.

%
%
%


%
%
%

\paragraph*{Soft maps.}
A soft map $\tilde{\mu}: \M \to \mathrm{Prob}(\N)$ is a function assigning a probability measure over $\N$ to each point in $\M$ \cite{solomon2012soft}. 
Soft maps can be equivalently represented by their densities, i.e., nonnegative scalar functions $\mu:\M \times \N \to [0,1]$ defined on the product manifold $\M\times\N$ satisfying $\tilde{\mu}(x)(B) = \int_{B\subseteq \N} \mu(x,y)\,\d y$ for all $x\in \M$ and all measurable subsets $B\subseteq \N$.

As a particular case, a bijection $\tilde\pi : \M \to \N$ induces a soft map $\tilde{\mu}$ by requiring, for all $x\in \M$, that $\tilde{\mu}(x)(B)=1$ if and only if $\tilde{\pi}(x)\in B \subseteq \N$, i.e., the image $\tilde{\mu}(x)$ is a unit Dirac mass $\delta_{\tilde{\pi}(x)}$ centered at $\tilde{\pi}(x)$.

\paragraph*{Functional maps.}
A functional map $T$ associated to a map $\tilde{\pi} : \M \to \N$ is a linear mapping $T : \mathcal{F} (\N) \to \mathcal{F}(\M)$ defined as  \cite{ovsjanikov12}:
\begin{equation}\label{eq:func}
T(g) = g \circ \tilde{\pi}\,.
\end{equation}

Note how this construction allows to move from identifying a map between manifolds to identifying a linear operator between Hilbert spaces. The functional map $T$ admits a matrix representation wrt orthogonal bases $\{\phi_i\}_{i\ge 1},\{\psi_j\}_{j\ge 1}$ on $\mathcal{F}(\M)$ and $\mathcal{F}(\N)$ respectively, with coefficients $\C = (c_{ij})$ determined as follows:
\begin{equation}\label{eq:c}
T(g) = \sum_{ij \ge 1} \langle \psi_j , g \rangle_\N \underbrace{\langle \phi_i , T(\psi_j) \rangle_\M}_{c_{ij}} \phi_i \,.
\end{equation}
%

\section{Discretization}\label{sec:discretization}
We show how to discretize the main quantities involved in our framework on 1D and 2D manifolds, as well as their products.

\begin{figure}[bt]
  \centering
  \begin{overpic}
  [trim=0cm 0cm 0cm 0cm,clip,width=0.99\linewidth]{./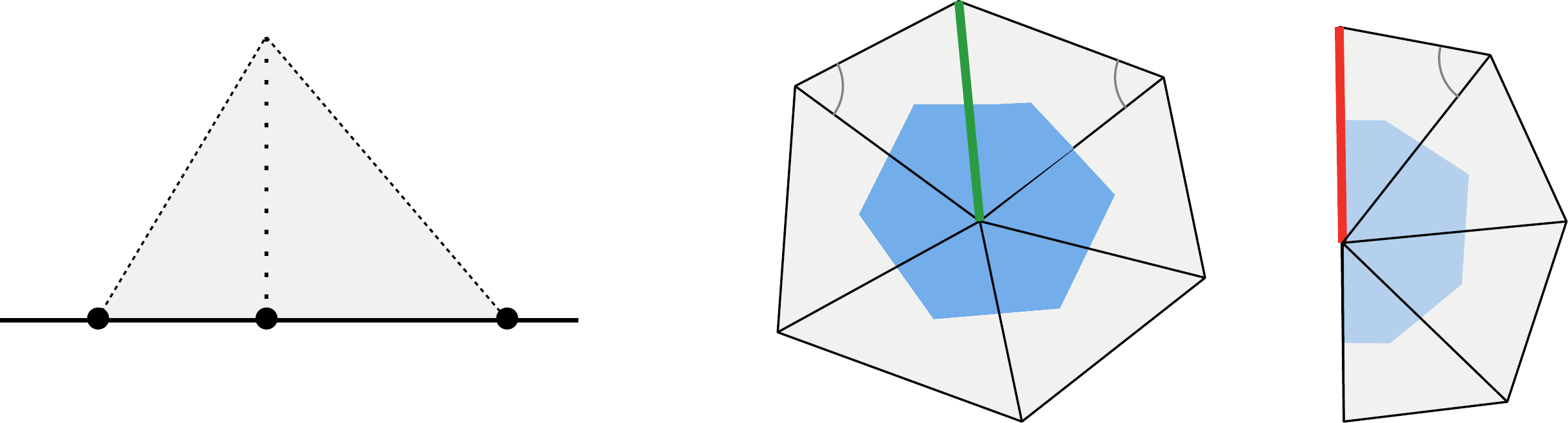}
  \put(5,3){\footnotesize $i$}
  \put(16,3){\footnotesize $j$}
  \put(31,3){\footnotesize $k$}
  \put(10,8){\footnotesize $e_{ij}$}
  \put(23,8){\footnotesize $e_{jk}$}
  \put(14,23){\footnotesize $1$}
  \put(16,24.7){\line(1,0){2}}
  \put(61,9){\footnotesize $i$}
  \put(60,27.5){\footnotesize $j$}
  \put(75,21){\footnotesize $k$}
  \put(48,20){\footnotesize $h$}
  \put(66,21){\footnotesize $\alpha_{ij}$}
  \put(54.5,20){\footnotesize $\beta_{ij}$}
  \put(87,21){\footnotesize $\alpha_{ij}$}
  \put(83.5,10){\footnotesize $i$}
  \put(83.5,25){\footnotesize $j$}
  \put(96,23){\footnotesize $k$}
  \put(17,-2){\footnotesize (a)}
  \put(77,-2){\footnotesize (b)}
  \end{overpic}
  \caption{\label{fig:fem}Discretization of the Laplace-Beltrami operator on a cycle graph {\em (a)} and on a triangle mesh {\em (b)} for interior (green) and boundary edges (red). We also show the hat basis function in {\em (a)}.}
\end{figure}

\paragraph*{1D shapes (curves).}
We model 1D manifolds as closed contours with circular topology (no boundary), discretized as 2-regular cycle graphs $\mathcal{G}=(\mathcal{N},\mathcal{E})$ with $n\ge 3$ nodes $\mathcal{N}$ and as many edges $\mathcal{E}$. The LB operator $\Delta$ is discretized using standard FEM with linear hat functions; in the hat basis, scalar functions on $\mathcal{G}$ are approximated piecewise-linearly on the edges. The Laplacian takes the form of a $n\times n$ sparse matrix $\L=\mathbf{S}^{-1}\mathbf{W}$, where:
\begin{align}
 w_{ij} &=
 \begin{cases} 
       -\frac{1}{\|e_{ij}\|} &  e_{ij} \in \mathcal{E} \\
      -\sum_{i\neq k} w_{ik}  &  i=j\\
      0 & \mathrm{otherwise} 
   \end{cases}
   \\
  s_{ij} &=
 \begin{cases} 
      \frac{1}{6}\|e_{ij}\| &  e_{ij} \in \mathcal{E} \\
	  \frac{1}{3}\sum_{k\in\mathcal{N}(i)} \|e_{ik}\| &  i=j\\
      0 & \mathrm{otherwise} 
   \end{cases}
\end{align}
%
and the notation is according to Figure~\ref{fig:fem}, with $\N(i)$ being the set of the neighbors of node $i$.
%
%
\rev{In} our tests we use non-lumped masses $s_{ij}$\rev{; in applications requiring additional efficiency, lumped mass matrices $\mathrm{diag}(\hat{s}_{ii})$ can be used by setting $\hat{s}_{ii}=\sum_j s_{ij}$.}
%

The product of two boundary-free 1D manifolds $\M,\N$ is a 2D manifold (a surface) $\M\times\N$ with torus topology. For the discretization of the Laplacian on $\M\times\N$, we appeal to the following:
%
\begin{theorem}\label{thm:LB1D}
{\textbf{(Discrete product Laplacian)}}
Let $\M$, $\N$ be 1D manifolds with no boundary, discretized as 2-regular cycle graphs, and let $\S_\M,\W_\M$ and $\S_\N,\W_\N$ be the mass and stiffness matrices for $\Delta_\M$ and $\Delta_\N$ respectively, obtained via FEM with respect to piecewise linear (hat) basis functions. Then,
\begin{align}
\S_{\M\times\N} &= \S_\M \otimes \S_\N\label{eq:sprod}\\
\W_{\M\times\N} &= \W_\M \otimes \S_\N + \S_\M \otimes \W_\N\label{eq:wprod}
\end{align}
are the mass and stiffness matrices for the product manifold Laplacian $\Delta_{\M\times\N}$ with respect to piecewise bilinear basis functions, defined on a quad meshing of the toric surface $\M\times\N$. Here, $\otimes$ denotes the Kronecker product.
\end{theorem}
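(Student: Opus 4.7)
The plan is to exploit the tensor product structure of the natural quad discretization of the torus $\M \times \N$. Identify vertices of the product mesh with pairs $(i,j)$ where $i$ is a node of $\M$ and $j$ is a node of $\N$; the faces are the quads $[i,i+1]\times[j,j+1]$. The piecewise bilinear hat basis function $\Phi_{(i,j)}$ attached to vertex $(i,j)$ factors exactly as $\Phi_{(i,j)}(x,y)=\phi_i(x)\,\psi_j(y)$, where $\phi_i,\psi_j$ are the 1D hat basis functions on $\M$ and $\N$. This factorization is the engine behind both identities.

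For the mass matrix, I would compute a generic entry by Fubini's theorem:
\begin{equation*}
(\S_{\M\times\N})_{(i,j),(k,\ell)} = \int_{\M\times\N} \Phi_{(i,j)} \Phi_{(k,\ell)}\,\d a = \left(\int_\M \phi_i\phi_k\,\d x\right)\left(\int_\N \psi_j\psi_\ell\,\d y\right) = (\S_\M)_{ik}(\S_\N)_{j\ell}.
\end{equation*}
With the standard row-major ordering of the pair $(i,j)$, this is exactly the $((i,j),(k,\ell))$-entry of $\S_\M \otimes \S_\N$, yielding \eqref{eq:sprod}.

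For the stiffness matrix, the crucial observation is that the product metric $g_\M \oplus g_\N$ is block diagonal, so the gradient of a separable function splits as $\nabla \Phi_{(i,j)}=(\phi_i'(x)\psi_j(y),\,\phi_i(x)\psi_j'(y))$ and the inner product on $T_{(x,y)}(\M\times\N)$ decouples into a sum of an $\M$-part and an $\N$-part (this is just the coordinate form of the tangent-space identity already stated in the Background). Therefore
\begin{equation*}
(\W_{\M\times\N})_{(i,j),(k,\ell)} = \int_{\M\times\N}\! \langle \nabla \Phi_{(i,j)},\nabla \Phi_{(k,\ell)}\rangle\,\d a = \int\!\phi_i'\phi_k'\psi_j\psi_\ell\,\d a + \int\!\phi_i\phi_k\psi_j'\psi_\ell'\,\d a,
\end{equation*}
and applying Fubini to each piece gives $(\W_\M)_{ik}(\S_\N)_{j\ell} + (\S_\M)_{ik}(\W_\N)_{j\ell}$, which is the $((i,j),(k,\ell))$-entry of $\W_\M \otimes \S_\N + \S_\M \otimes \W_\N$. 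This establishes \eqref{eq:wprod}.

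The only real subtlety, and what I expect to be the main obstacle to a clean write-up, is bookkeeping: one has to fix an indexing convention for the Kronecker product consistent with the ordering of vertices in the quad mesh, and verify that the supports of the bilinear hats $\Phi_{(i,j)}$ glue together to form a valid $H^1$-conforming finite element space on the torus (i.e., that the tensor-product mesh is a legitimate quad discretization with no boundary). Once these combinatorial matters are settled, the algebraic identities follow immediately from the separability of $\Phi_{(i,j)}$ and Fubini, without any computation beyond what is done in the standard 1D FEM.
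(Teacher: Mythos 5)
Your proposal is correct and follows essentially the same route as the paper's proof: both rest on the separability of the bilinear hats ($\Phi_{(i,j)} = \phi_i \wedge \psi_j$), Fubini factorization of the mass integrals, and the splitting of the gradient inner product under the product metric so that the cross terms vanish and the stiffness entry becomes $(\W_\M)_{ik}(\S_\N)_{j\ell} + (\S_\M)_{ik}(\W_\N)_{j\ell}$. The only difference is presentational: you compute a generic entry in one shot, whereas the paper works through the diagonal and neighboring-vertex cases explicitly.
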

\begin{proof} See Appendix~\ref{sec:proofs}. \end{proof}

\newtheorem{corollary}{Corollary}
\begin{corollary}\label{thm:L}
The LB operator $\Delta_{\M\times\N}$ is discretized as:
\begin{equation}\label{eq:lb1d}
\L_{\M\times\N} = \L_\M\otimes\mathbf{I}_{\N} + \mathbf{I}_{\M}\otimes\L_\N\,,
\end{equation}
where $\mathbf{I}_{\M},\mathbf{I}_{\N}$ are $n_\M\times n_\M$ and $n_\N\times n_\N$ identity matrices.
\end{corollary}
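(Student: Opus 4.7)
My plan is to derive the corollary directly from Theorem~\ref{thm:LB1D} by a short algebraic manipulation based on the definition $\L = \S^{-1}\W$ together with two standard identities of the Kronecker product.

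First I would write $\L_{\M\times\N} = \S_{\M\times\N}^{-1}\W_{\M\times\N}$ and substitute the explicit expressions \eqref{eq:sprod}--\eqref{eq:wprod} provided by Theorem~\ref{thm:LB1D}, yielding
\begin{equation*}
\L_{\M\times\N} = (\S_\M \otimes \S_\N)^{-1} \bigl( \W_\M \otimes \S_\N + \S_\M \otimes \W_\N \bigr).
\end{equation*}
Note that this step is well-defined because $\S_\M$ and $\S_\N$ are (symmetric positive definite) FEM mass matrices on closed 1D manifolds and are therefore invertible, which in turn makes $\S_\M \otimes \S_\N$ invertible.

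Next I would invoke the Kronecker inversion identity $(A \otimes B)^{-1} = A^{-1} \otimes B^{-1}$ to rewrite the leading factor as $\S_\M^{-1}\otimes \S_\N^{-1}$, and then distribute over the sum and apply the mixed-product property $(A\otimes B)(C\otimes D) = (AC)\otimes(BD)$ to each summand. This gives
\begin{equation*}
\L_{\M\times\N} = (\S_\M^{-1}\W_\M)\otimes(\S_\N^{-1}\S_\N) + (\S_\M^{-1}\S_\M)\otimes(\S_\N^{-1}\W_\N),
\end{equation*}
and recognizing $\S_\M^{-1}\W_\M = \L_\M$, $\S_\N^{-1}\W_\N = \L_\N$, and $\S^{-1}\S = \mathbf{I}$ in each factor immediately yields \eqref{eq:lb1d}.

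There is no real obstacle here: the statement is essentially a corollary in the literal sense, and the proof is a two-line Kronecker calculation. The only thing worth flagging explicitly is the invertibility of the mass matrices (so that $\S_{\M\times\N}^{-1}$ makes sense), and the fact that the Kronecker identities used are purely algebraic and do not depend on any geometric structure, so they apply verbatim to the FEM matrices produced in Theorem~\ref{thm:LB1D}.
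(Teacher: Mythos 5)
Your proof is correct and follows exactly the same route as the paper's: substitute the expressions from Theorem~\ref{thm:LB1D} into $\L = \S^{-1}\W$, apply the Kronecker inversion and mixed-product identities, and simplify. The remark on invertibility of the mass matrices is a small but welcome addition that the paper leaves implicit.
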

\begin{proof} See Appendix~\ref{sec:proofs}. \end{proof}

The discretization of $\Delta_{\M\times\N}$ does {\em not} require the explicit construction of a quad mesh embedded in $\mathbb{R}^3$; the toric shapes shown in these pages only serve visualization purposes. Further, the discretization \eqref{eq:lb1d} is consistent with the spectral decomposition identities \eqref{eq:prodlap}; see \cite{fiedler73} and \cite[Proposition 33.6]{hammack11} for additional discussion.

\paragraph*{2D shapes (surfaces).}
We model 2D surfaces as manifold triangle meshes $(\mathcal{V},\mathcal{E},\mathcal{F})$ with $n$ vertices $\mathcal{V}$ connected by edges $\mathcal{E}=\mathcal{E}_\mathrm{i}\cup\mathcal{E}_\mathrm{b}$ (where $\mathcal{E}_\mathrm{i}$ and $\mathcal{E}_\mathrm{b}$ are interior and boundary edges, respectively) and triangle faces $\mathcal{F}$. In analogy to the 1D case, the discretization of the LB operator is obtained using FEM with piecewise linear basis functions on triangle elements~\cite{duffin1959distributed}, taking the form of an $n\times n$ sparse matrix $\L=\mathbf{S}^{-1}\mathbf{W}$, where
\begin{align}
w_{ij} &=
 \begin{cases} 
       (\cot \alpha _{ij} + \cot \beta _{ij})/2 &   ij \in \mathcal{E}_\mathrm{i}  \\
			(\cot \alpha _{ij})/2 &   ij \in \mathcal{E}_\mathrm{b}\\		
			-\sum_{k\neq i} w_{ik}    & i = j\\
			0 & \mathrm{otherwise, and}
   \end{cases}
   \\
s_{ij} &=
 \begin{cases} 
       (A(T_{hij})+A(T_{ijk}))/{12} &   ij \in \mathcal{E}_\mathrm{i}\\
			A(T_{ijk})/{12} &   ij \in \mathcal{E}_\mathrm{b}\\		
			\frac{1}{6}\sum_{k\in \N(i)} A(T_k)     & i = j \\
			0 & \mathrm{otherwise.} 
   \end{cases}
\end{align}
Here, $A(T)$ denotes the area of triangle $T$ and $\N(i)$ is the set of the neighbors of vertex $i$; see Figure~\ref{fig:fem} for notation.

Given two 2D manifolds $\M$ and $\N$, their product is a 4D manifold $\M\times\N$. The LB operator on $\M\times\N$ is discretized similarly to the lower-dimensional case:

\begin{corollary}\label{thm:LB2D}
Let $\M$, $\N$ be surfaces discretized as triangle meshes, and let $\S_\M,\W_\M$ and $\S_\N,\W_\N$ be the mass and stiffness matrices for $\Delta_\M$ and $\Delta_\N$. Then, equations \eqref{eq:sprod}-\eqref{eq:lb1d} provide a valid discretization of the LB operator $\Delta_{\M\times\N}$. This discretization is equivalent to the application of FEM on a 3-3 duoprism tessellation of the 4D product manifold $\M\times\N$ using multilinear basis functions.

\end{corollary}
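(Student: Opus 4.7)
The plan is to follow the same pattern as the proof of Theorem~\ref{thm:LB1D} but with triangle elements replacing interval elements, so the bulk of the work is to identify the correct product element and product basis, and then to verify that the FEM integrals factor as Kronecker (sum/product) expressions.

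First I would describe the product tessellation: if $\M$ is triangulated with faces $T_p\subset\M$ and $\N$ with faces $T_q\subset\N$, then $\M\times\N$ is naturally tessellated by the 4D cells $T_p\times T_q$. Each such cell is the Cartesian product of two 2-simplices, i.e.\ the \emph{3-3 duoprism}, with the product metric inherited from $g_\M\oplus g_\N$. On each such cell I would take as basis the tensor product $\Phi_{ij}(x,y)=\phi_i(x)\psi_j(y)$ of the linear hat functions $\phi_i$ on $\M$ and $\psi_j$ on $\N$; this is the natural multilinear element on a duoprism, is globally continuous, has nodal basis at the product vertices $\{(v_i,v_j)\}$, and its support is the union of the duoprisms incident on $(v_i,v_j)$.

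Next I would compute the FEM mass and stiffness entries and check that they factor. For the mass matrix the integral trivially splits via Fubini,
\begin{equation}
\int_{\M\times\N}\!\Phi_{ij}\,\Phi_{kl}\,\d a=\int_{\M}\!\phi_i\phi_k\,\d x\int_{\N}\!\psi_j\psi_l\,\d y=(\S_\M)_{ik}(\S_\N)_{jl},
\end{equation}
which assembles into $\S_{\M\times\N}=\S_\M\otimes\S_\N$ as in \eqref{eq:sprod}. For the stiffness matrix the crucial point is that, because the product metric is the direct sum $g_\M\oplus g_\N$, the gradient on $\M\times\N$ splits as $\nabla_{\M\times\N}\Phi_{ij}=(\psi_j\nabla_\M\phi_i,\,\phi_i\nabla_\N\psi_j)$, and the pointwise inner product (evaluated with the block-diagonal metric) reads
\begin{equation}
\langle\nabla\Phi_{ij},\nabla\Phi_{kl}\rangle=(\psi_j\psi_l)\langle\nabla\phi_i,\nabla\phi_k\rangle_\M+(\phi_i\phi_k)\langle\nabla\psi_j,\nabla\psi_l\rangle_\N.
\end{equation}
Integrating and using Fubini a second time yields
\begin{equation}
(\W_{\M\times\N})_{(i,j),(k,l)}=(\W_\M)_{ik}(\S_\N)_{jl}+(\S_\M)_{ik}(\W_\N)_{jl},
\end{equation}
which is exactly \eqref{eq:wprod}. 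Equation \eqref{eq:lb1d} for $\L_{\M\times\N}$ then follows from Kronecker algebra using $(\S_\M\otimes\S_\N)^{-1}=\S_\M^{-1}\otimes\S_\N^{-1}$ and the mixed-product identity $(A\otimes B)(C\otimes D)=(AC)\otimes(BD)$, exactly as in Corollary~\ref{thm:L}.

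The only step that requires care, and the one I would flag as the main obstacle, is the gradient splitting on a duoprism cell: I need to make sure that (i) the multilinear element on $T_p\times T_q$ is genuinely conforming in $H^1(\M\times\N)$, which follows from continuity across shared $(d_\M+d_\N-1)$-faces since the hat functions are continuous on each factor, and (ii) that the integrals over $T_p\times T_q$ really do factor into products of 2D integrals, which requires the volume element $\d a=\d x\,\d y$ coming from the block-diagonal metric and the absence of cross derivatives in the separable basis. Once these are in place, everything else reduces to the same Kronecker bookkeeping used in the 1D proof, so the 2D corollary follows with no new ingredients beyond the duoprism geometry.
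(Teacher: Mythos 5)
Your proposal is correct and follows essentially the same route as the paper: the paper's proof likewise identifies the product cells as 3-3 duoprisms, takes the multilinear basis to be the outer product of the triangle hat functions, and observes that the mass/stiffness integral factorizations from the 1D proof of Theorem~\ref{thm:LB1D} carry over unchanged, with the Kronecker assembly of $\L_{\M\times\N}$ then following as in Corollary~\ref{thm:L}. You simply spell out the Fubini and gradient-splitting steps that the paper leaves implicit by reference to Equations~\eqref{eq:sfirst}--\eqref{eq:wlast}.
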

\begin{proof} See Appendix~\ref{sec:proofs}.\end{proof}

\begin{figure}[bt]
  \centering
    \begin{overpic}
  [trim=0cm 0cm 0cm 0cm,clip,width=0.99\linewidth]{./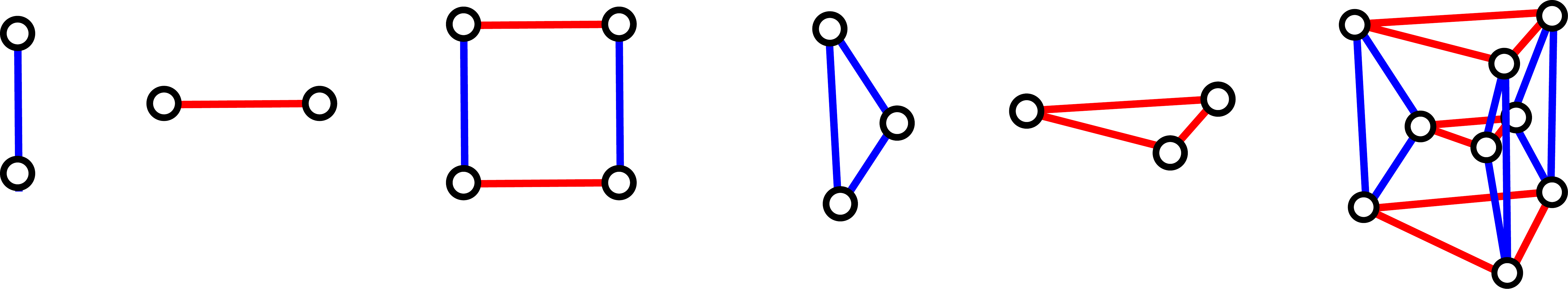}
  \put(4,10.5){$\times$}
  \put(24,10.5){$=$}
  \put(60,10.5){$\times$}
  \put(81,10.5){$=$}
  \put(47,5){\line(0,1){12}}
  \put(18,0){\footnotesize (a)}
  \put(73,0){\footnotesize (b)}
  \end{overpic}
  \caption{\label{fig:products}The Cartesian product of two edge elements is a quad {\em (a)}, while taking the product of two triangles yields a 4D geometric structure called a 3-3 (or triangular) duoprism \final{\cite{polytopes} visualized with a Schlegel diagram \cite{schlegeldiagram}} {\em (b)}. Note that all these objects are polytopes (i.e. they have faces), not simple graphs.}
\end{figure}

We emphasize that\rev{, as a consequence of the Corollary,} the computation of the product Laplacian \rev{$\Delta_{\M\times\N}$} does not require constructing a high-dimensional embedding for $\M\times\N$, \rev{thus} avoiding cumbersome manipulation of duoprismic product elements (see Figure~\ref{fig:products} for an illustration\rev{ of these elements}).

Finally, scalar functions on a manifold $\M$ are represented by $n$-dimensional vectors $\mathbf{f} = (f(x_1), \hdots, f(x_n))^\top$, where $x_1,\dots,x_n$ denote graph nodes and mesh vertices in the 1D and 2D case respectively. 
Inner products $\langle f,g \rangle_\M$ are discretized as $\mathbf{f}^\top \S \mathbf{g}$, where $\S$ is the mass matrix. On product manifolds, scalar functions are represented as $n_\M\times n_\N$ matrices $\mathbf{F}$, usually deriving from an outer product $f\wedge g$ discretized as $\mathbf{f}\mathbf{g}^\top$; inner products are computed as $\mathrm{vec}(\mathbf{F})^\top \S ~\mathrm{vec}(\mathbf{G})$ \final{($\mathrm{vec}(\mathbf{F})$ stacks the columns of $\mathbf{F}$ into a vector).}

\section{Map representation on the product manifold}

\paragraph*{Soft functional maps.}
It will be instrumental for our purposes to introduce a  ``soft'' generalization of functional maps. For soft maps $\tilde{\mu}: \M \to \mathrm{Prob}(\N)$ with associated density $\mu \in L^1(\M \times \N)$, we define a {\em soft functional map} \rev{$T_\mu : \mathcal{F}(\N) \to \mathcal{F}(\M)$} as the expectation
\begin{align}\label{eq:soft}
T_{\mu}(g)(x) =  \int_\N g(y) \mu(x,y)\,\d y\,.
\end{align}

It is easy to check that  $T_{\mu}$ is linear in $g$, hence admitting a matrix representation with coefficients defined as in \eqref{eq:c}\rev{; in particular, in the standard basis one obtains a stochastic matrix with each row summing to $1$.} If the density $\mu$ encodes a non-soft map (i.e., whenever $\mu(x,\cdot)$ is concentrated at one point), the definition \eqref{eq:soft}  boils down to the original definition \eqref{eq:func}, $T(g)(x) = \int_\N g(y)\,\delta_{\tilde{\pi}(x)}(y)\,\d y = (g \circ \tilde{\pi})(x)$, where the last equivalence stems from the sampling property of Dirac deltas. 

We begin our discussion by deriving a connection between functional map matrices and expanding soft map measures in the Laplace--Beltrami basis:

\begin{theorem}[Equivalence]\label{thm:equivalence}
Let $T_{\mu} : \mathcal{F} (\N) \to \mathcal{F}(\M)$ be a soft functional map \eqref{eq:soft} with underlying density $\mu \in L^1(\M\times\N)$. Further, let $c_{ij} = \langle \phi_i , T_{\mu}(\psi_j) \rangle_\M$ be the matrix coefficients of $T_{\mu}$ in the orthogonal bases $\{\phi_i\}_{i\ge 1},\{\psi_j\}_{j\ge 1}$, and let $p_{ij} = \langle \phi_i \wedge \psi_j, \mu  \rangle_{\M \times \N}$ be the expansion coefficients of $\mu$ in the product basis $\{\phi_i \wedge \psi_j\}_{i,j}$, such that $\mu = \sum_{ij} (\phi_i \wedge \psi_j) p_{ij}$.
Then, $c_{ij} = p_{ij}$ for all $i,j$.
\end{theorem}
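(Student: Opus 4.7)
The plan is to prove the equality by direct computation: unfold the definition of $c_{ij}$, substitute the defining integral of the soft functional map $T_\mu$, apply Fubini's theorem to turn the iterated integral into a single integral over the product manifold, and then recognize the resulting expression as $\langle \phi_i \wedge \psi_j, \mu\rangle_{\M\times\N}$.

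In more detail, I would start from
\[
c_{ij} = \langle \phi_i, T_\mu(\psi_j)\rangle_\M = \int_\M \phi_i(x)\,T_\mu(\psi_j)(x)\,\d x,
\]
and immediately substitute the definition \eqref{eq:soft} of $T_\mu$, yielding
\[
c_{ij} = \int_\M \phi_i(x) \left(\int_\N \psi_j(y)\,\mu(x,y)\,\d y\right) \d x.
\]
Next I would invoke Fubini's theorem to rewrite this as a single integral over $\M\times\N$ against the product area element $\d a = \d x\,\d y$:
\[
c_{ij} = \int_{\M\times\N} \phi_i(x)\psi_j(y)\,\mu(x,y)\,\d a.
\]
Recognizing $\phi_i(x)\psi_j(y) = (\phi_i \wedge \psi_j)(x,y)$ from the outer product definition, this is precisely $\langle \phi_i \wedge \psi_j, \mu\rangle_{\M\times\N} = p_{ij}$, as claimed.

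The only delicate point is the justification of Fubini's theorem, which requires an integrability hypothesis on the integrand. Since $\mu \in L^1(\M\times\N)$ by assumption and the Laplace--Beltrami eigenfunctions $\phi_i, \psi_j$ are bounded on the (compact) manifolds $\M$ and $\N$, the product $\phi_i(x)\psi_j(y)\mu(x,y)$ is in $L^1(\M\times\N)$, so Fubini applies. I do not expect any real obstacle beyond this bookkeeping; the theorem is in essence a restatement of the adjoint relationship between the integral operator $T_\mu$ and its kernel $\mu$, expressed in the tensor-product basis $\{\phi_i \wedge \psi_j\}_{i,j}$ whose orthonormality was verified just above in the background section.
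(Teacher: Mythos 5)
Your proof is correct and follows essentially the same route as the paper's: substitute the definition of $T_\mu$ into $c_{ij}$, collapse the iterated integral into a single integral over $\M\times\N$, and identify it with $p_{ij}$. The only difference is that you explicitly justify the Fubini step via $\mu\in L^1$ and boundedness of the eigenfunctions, a point the paper leaves implicit.
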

%
\begin{proof}
%
%
The functional map matrix coefficients are computed as:
\begin{align}
c_{ij} = \langle \phi_i , T_{\mu}(\psi_j) \rangle_\M &= \int_\M \phi_i (x) T_{\mu}(\psi_j)(x)\,\d x \\
&=\int_\M \phi_i(x) \int_\N \psi_j(y) \mu(x,y)\,\d y\,\d x\\
&=\int_{\M\times \N}\phi_i(x) \psi_j(y) \mu(x,y)\,\d a\,,\label{eq:cij}
\end{align}
%
while the expansion coefficients of $\mu$ are given by
\begin{align}
p_{ij} = \langle \phi_i \wedge \psi_j, \mu  \rangle_{\M \times \N}
= \int_{\M\times \N}\phi_i(x) \psi_j(y)\mu(x,y)\,\d a\,.\label{eq:pij}
\end{align}
Comparing equations \eqref{eq:cij} and \eqref{eq:pij}, we see that $c_{ij} = p_{ij}$ for any choice of $i,j \geq 1$.
\end{proof}

Note that Theorem~\ref{thm:equivalence} applies to any choice of orthogonal bases $\{\phi_i\}_{i\ge 1}\in\mathcal{F}(\M),\{\psi_j\}_{j\ge 1}\in\mathcal{F}(\N)$.


\noindent\textbf{Spectral representation.}
Consider the order-$k$, band-limited approximation of $\mu$:
\begin{equation}\label{eq:ok}
\mu \approx \sum_{\ell=1}^k \xi_\ell p_\ell\,,
\end{equation}
where each $\xi_\ell$ is an eigenfunction of $\Delta_{\M\times\N}$ which uniquely identifies, via \eqref{eq:prodlap}, a pair of eigenfunctions $\phi_i, \psi_j$ on $\M$ and $\N$ respectively. According to Theorem~\ref{thm:equivalence}, the expansion coefficients $p_\ell$ are exactly those appearing in the functional map matrix $\C$, when this is expressed in the Laplacian eigenbases of $\M$ and $\N$ as originally proposed by Ovsjanikov et al. \cite{ovsjanikov12}. There is, however, a crucial difference in the way the two sets of coefficients are stored. We come to the following observation:

\begin{figure}[bt]
  \centering
  \begin{overpic}
  [trim=0cm 0cm 0cm 0cm,clip,width=0.16\linewidth]{./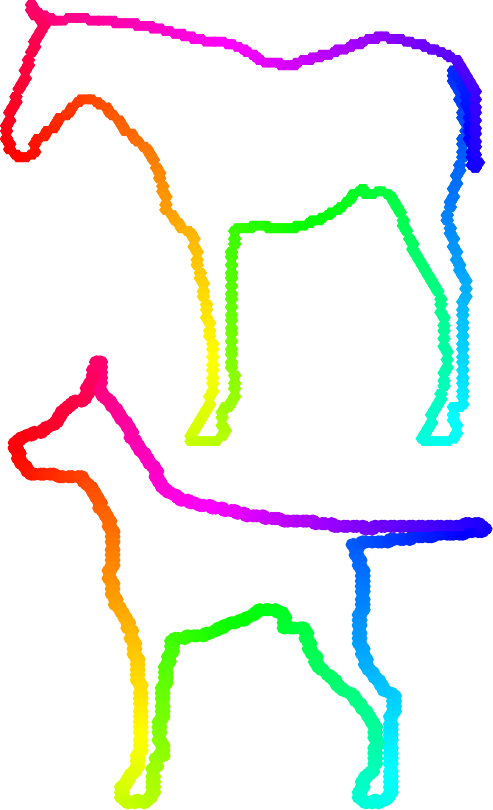}
  \end{overpic}
  \hspace{0.11cm}
  \begin{overpic}
  [trim=0cm 0cm 0cm 0cm,clip,width=0.79\linewidth]{./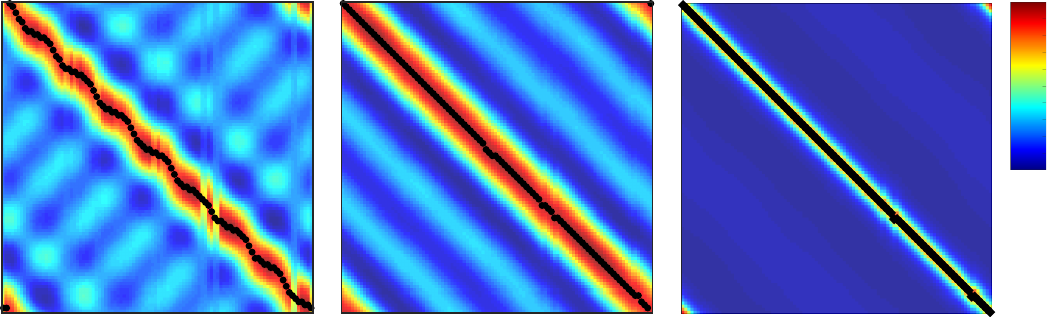}
  \put(12,-5){\footnotesize (a)}
  \put(45,-5){\footnotesize (b)}
  \put(78,-5){\footnotesize (c)}
  \put(100.5,13.8){\tiny 0}
  \put(100.5,27.5){\tiny 1}
  \end{overpic}
  \vspace{0.2cm}
  \caption{\label{fig:approx}The ground truth map (here the identity) between two shapes \rev{approximated according to} {\em (a)} the \rev{standard} functional map representation, {\em (b)} the (separable) LB eigenfunctions of the product manifold, \rev{ordered according to the product eigenvalues,} and {\em (c)} the (inseparable) localized harmonics on the product manifold. \rev{All three cases use the same amount of coefficients.} The black curve in each matrix represents the maximum value for each row. \rev{In this example the product manifold is a flat torus, represented in the parametric domain in {\em (a), (b), (c)}.}}
\end{figure}

\noindent\textbf{Truncation.}
The product eigenfunctions $\xi_{\ell}$ appearing in the summation \eqref{eq:ok} are associated to the product eigenvalues $\alpha_i+\beta_j$, which are ordered non-decreasingly.
%
%
In contrast, in \cite{ovsjanikov12} it was proposed to truncate the two summations in \eqref{eq:c} to $i=1,\dots,k_\M$ and $j=1,\dots,k_\N$, where indices $i$ and $j$ follow the non-decreasing order of the eigenvalue sequences $\alpha_i$ and $\beta_j$ {\em separately}.

We see that, due to the {\em different ordering}, the eigenfunctions $\phi_i,\psi_j$ involved in the approximation \eqref{eq:ok} of $\mu$ are not necessarily all those involved in the construction of $\C$ \eqref{eq:c}, assuming $k=k_\M k_\N$. In the former case we operate with a reduced basis directly on $\M\times\N$, while in the latter case we consider two reduced bases on $\M$ and $\N$ {\em independently}. This has direct implications on the quality of the approximated maps, as illustrated in Figure~\ref{fig:approx}.

\noindent\textbf{Relation to finite sections.}
The functional map representation was originally introduced in \cite{ovsjanikov12} as a convenient language for solving map inference problems of the type \cite{ovsjanikov2016computing}:
\begin{equation}\label{eq:cab}
\C\A=\B\,,
\end{equation}
where matrices $\B = (\langle \phi_i, f_j \rangle_\M),\A = (\langle \psi_i, g_j \rangle_\N)$ contain Fourier coefficients of a given set of corresponding ``probe'' functions $f_j,g_j,j=1,\dots,q$ on $\M$ and $\N$ respectively (typically, descriptors are used). In the problem above, one is asked to estimate the functional map $\C$.

By truncating the matrix $\C$ to the left upper $k_\M \times k_\N$ submatrix (as in \cite{ovsjanikov12}), one obtains a finite-dimensional approximation of the infinite linear system \eqref{eq:cab}. This procedure, known as the {\em finite section method} \cite{grochenig2010convergence}, does not always guarantee convergence, and a series of remedies using rectangular sections ($k_\M \neq k_\N$) have been proposed in the literature (see \cite{glashoff17} for a discussion pertaining to functional maps).

Recall that, according to Theorem~\ref{thm:equivalence}, the matrix elements $c_{ij}$ correspond to the expansion coefficients $p_{ij}$ appearing in \eqref{eq:ok}. Thus, \rev{due to the different ordering of the $p_{ij}$'s,} the approximation carried out in~\eqref{eq:ok} can be regarded as an ``irregular'' finite section (see Figure~\ref{fig:section}\rev{, right}); in contrast with purely {\em algebraic} approaches considering general systems of linear equations \rev{such as \eqref{eq:cab}}, our approach carries now a {\em geometric} meaning in that \rev{the shape of the section is determined by} the geometry of the product manifold.

\begin{figure}[bt]
  \centering
%
%
\definecolor{mycolor1}{rgb}{0.00000,0.44700,0.74100}%
\pgfplotsset{scaled y ticks=false}
\begin{tikzpicture}

\begin{axis}[%
width=0.30\columnwidth,
height=0.35\columnwidth,
scale only axis,
xmin=0,
xmax=200,
ymin=-0,
ymax=0.0009,
yticklabels={0,0,2,4,6,8},
ylabel={\footnotesize Eigenvalue},
ylabel style={at={(0.28,0.48)}},
every x tick label/.append style={font=\color{black}, font=\tiny},
every y tick label/.append style={font=\color{black}, font=\tiny},
axis background/.style={fill=white},
title style={font=\bfseries, at={(0.49,0.91)}},
title={Laplacian spectrum},
axis x line*=bottom,
axis y line*=left,
legend style={legend cell align=left, align=left, draw=white!15!black, at={(0.55,0.94)}}
]
\addplot [color=mycolor1, line width=2.0pt, forget plot]
  table[row sep=crcr]{%
1	-0\\
2	1.16162882193294e-05\\
3	1.16163948575831e-05\\
4	1.54018580644788e-05\\
5	1.54019636511293e-05\\
6	2.70181462838082e-05\\
9	2.70183585087125e-05\\
10	4.65135969704988e-05\\
11	4.65155829090236e-05\\
12	6.16700909574774e-05\\
17	6.19175465601529e-05\\
18	7.32863791768068e-05\\
21	7.32879838665212e-05\\
22	0.000104837796612856\\
23	0.000104847409829745\\
24	0.000108183687927976\\
27	0.00010818717188954\\
28	0.000120239654677334\\
31	0.000120249373480874\\
32	0.000138994674415471\\
33	0.000138998555115677\\
34	0.000150610962606379\\
37	0.00015061494997326\\
38	0.000166507887541911\\
41	0.000166518998838683\\
42	0.000185508271357548\\
45	0.000185514138024701\\
46	0.000186825950777347\\
47	0.00018686893045583\\
48	0.000202227808841826\\
51	0.00020227089410696\\
52	0.000243832470999905\\
55	0.000243845964945422\\
56	0.00024766625622874\\
57	0.000247726559962302\\
58	0.000248496041734825\\
61	0.000248540519464768\\
62	0.00025928254444807\\
65	0.000259342954819886\\
66	0.000292846150614423\\
67	0.000292906257641334\\
68	0.000294179853199239\\
71	0.000294242142871326\\
72	0.000308248008678902\\
75	0.000308308221292464\\
76	0.000325820625192819\\
79	0.000325867485571507\\
80	0.000352504052841596\\
83	0.000352573969792047\\
84	0.000354516241571901\\
87	0.000354577846621851\\
88	0.000388184616156195\\
89	0.000388247137721009\\
90	0.000399800904347103\\
93	0.000399863532578593\\
94	0.000423351787446791\\
95	0.000423403363328134\\
96	0.000431840825001473\\
99	0.00043190481272859\\
100	0.000434492207006087\\
107	0.000434762720630033\\
108	0.000438753645511269\\
111	0.000438805326979264\\
112	0.000485021878375846\\
115	0.000485074952308651\\
116	0.000493022412740629\\
119	0.000493094547550754\\
120	0.000540512406843163\\
123	0.000540632817603637\\
124	0.000561015849314117\\
125	0.000561245669615573\\
126	0.00056234646183384\\
129	0.000562401918443811\\
130	0.000572632137533446\\
133	0.000572862064473156\\
134	0.000575010566933543\\
137	0.00057511606817684\\
138	0.000578663990978612\\
139	0.000579141390090854\\
140	0.000594065849043091\\
141	0.000594065954629741\\
142	0.000594543248155333\\
143	0.000594543353741983\\
144	0.000607529446284616\\
147	0.000607761252524597\\
148	0.000640334081936089\\
149	0.00064033557998755\\
150	0.000640811481019909\\
151	0.000640812979099792\\
152	0.000665853645926973\\
155	0.000666093079445318\\
156	0.000671018043675531\\
159	0.000671129923290437\\
160	0.000681030766742197\\
163	0.000681153395362344\\
164	0.000717658665394083\\
165	0.000717662546094289\\
166	0.000718136064477903\\
167	0.000718139945206531\\
168	0.000747841800091464\\
171	0.000748114600071403\\
172	0.000759850349652424\\
173	0.000760364367636157\\
174	0.0007670763823171\\
175	0.000767221495749482\\
176	0.000775252207716903\\
177	0.000775252313331976\\
178	0.000775766225700636\\
179	0.000775766331287286\\
180	0.000778692670508008\\
183	0.000778837890607065\\
184	0.000811536403574564\\
187	0.000811650501049144\\
188	0.000813589979259177\\
191	0.000813737078658505\\
192	0.000821520440609902\\
193	0.000821521938661363\\
194	0.000822034458565213\\
195	0.000822035956616674\\
196	0.000826330247207352\\
199	0.000826867950053156\\
200	0.00085386199992854\\
};

\addplot [color=red, line width=1.0pt, draw=none, mark size=1.5pt, mark=*, mark options={solid, fill=red}]
  table[row sep=crcr]{%
1	-0\\
2	1.16162882193294e-05\\
3	1.16163948575831e-05\\
10	4.65135969704988e-05\\
11	4.65155829090236e-05\\
22	0.000104837796612856\\
23	0.000104847409829745\\
46	0.000186825950777347\\
47	0.00018686893045583\\
66	0.000292846150614423\\
4	1.54018580644788e-05\\
6	2.70181462838082e-05\\
8	2.70182529220619e-05\\
14	6.19154550349776e-05\\
16	6.19174409735024e-05\\
28	0.000120239654677334\\
30	0.000120249267894224\\
48	0.000202227808841826\\
50	0.000202270788520309\\
72	0.000308248008678902\\
5	1.54019636511293e-05\\
7	2.70182518704587e-05\\
9	2.70183585087125e-05\\
15	6.19155606216282e-05\\
17	6.19175465601529e-05\\
29	0.000120239760263985\\
31	0.000120249373480874\\
49	0.000202227914428477\\
51	0.00020227089410696\\
73	0.000308248114265552\\
12	6.16700909574774e-05\\
18	7.32863791768068e-05\\
19	7.32864858150606e-05\\
24	0.000108183687927976\\
26	0.000108185673838079\\
38	0.000166507887541911\\
40	0.000166517500787222\\
58	0.000248496041734825\\
60	0.000248539021413308\\
84	0.000354516241571901\\
13	6.16715890089381e-05\\
20	7.32878772282675e-05\\
21	7.32879838665212e-05\\
25	0.000108185185979437\\
27	0.00010818717188954\\
39	0.000166509385621794\\
41	0.000166518998838683\\
59	0.000248497539786285\\
61	0.000248540519464768\\
85	0.000354517739623361\\
32	0.000138994674415471\\
34	0.000150610962606379\\
35	0.000150611069273054\\
42	0.000185508271357548\\
43	0.000185510257296073\\
52	0.000243832470999905\\
54	0.000243842084245216\\
76	0.000325820625192819\\
78	0.000325863604871302\\
96	0.000431840825001473\\
33	0.000138998555115677\\
36	0.000150614843335006\\
37	0.00015061494997326\\
44	0.000185512152086176\\
45	0.000185514138024701\\
53	0.000243836351728532\\
55	0.000243845964945422\\
77	0.000325824505893024\\
79	0.000325867485571507\\
97	0.0004318447057301\\
56	0.00024766625622874\\
62	0.00025928254444807\\
63	0.000259282651086323\\
68	0.000294179853199239\\
69	0.000294181839137764\\
80	0.000352504052841596\\
81	0.000352513666058485\\
100	0.000434492207006087\\
101	0.00043453518668457\\
120	0.000540512406843163\\
57	0.000247726559962302\\
64	0.000259342848181632\\
65	0.000259342954819886\\
70	0.000294240156932801\\
71	0.000294242142871326\\
82	0.000352564356575158\\
83	0.000352573969792047\\
102	0.00043455251073965\\
103	0.000434595490418133\\
122	0.000540572710576726\\
88	0.000388184616156195\\
90	0.000399800904347103\\
91	0.000399801010985357\\
104	0.000434698213098272\\
105	0.000434700199036797\\
116	0.000493022412740629\\
117	0.00049303202598594\\
134	0.000575010566933543\\
135	0.000575053546612025\\
160	0.000681030766742197\\
};
\addlegendentry{$\C$}

\addplot [color=mycolor1, dashed, line width=2.0pt, forget plot]
  table[row sep=crcr]{%
100	0\\
100	0.00085386199992854\\
};
\end{axis}
\end{tikzpicture}%
%
%
\definecolor{mycolor1}{rgb}{0.00000,0.44700,0.74100}%
\begin{tikzpicture}

\begin{axis}[%
width=0.4\columnwidth,
scale only axis,
axis on top,
xmin=0,
xmax=14,
xlabel style={font=\color{white!15!black}},
y dir=reverse,
ymin=0,
ymax=12,
xtick={1,10},
xticklabels={1,10},
ytick={1,10},
yticklabels={1,10},
every x tick label/.append style={font=\color{black}, font=\tiny},
every y tick label/.append style={font=\color{black}, font=\tiny},
axis background/.style={fill=white},
title style={font=\bfseries, at={(0.48,0.92)}},
title={Functional map coefficients},
legend={},
colormap={mymap}{[1pt] rgb(0pt)=(0,0,0.372549); rgb(1pt)=(0,0,0.380392); rgb(2pt)=(0,0,0.384314); rgb(3pt)=(0,0,0.392157); rgb(4pt)=(0,0,0.396078); rgb(5pt)=(0,0,0.403922); rgb(6pt)=(0,0,0.407843); rgb(8pt)=(0,0,0.423529); rgb(9pt)=(0,0,0.427451); rgb(10pt)=(0,0,0.435294); rgb(11pt)=(0,0,0.439216); rgb(12pt)=(0,0,0.447059); rgb(13pt)=(0,0,0.45098); rgb(15pt)=(0,0,0.466667); rgb(16pt)=(0,0,0.470588); rgb(17pt)=(0,0,0.478431); rgb(18pt)=(0,0,0.482353); rgb(19pt)=(0,0,0.490196); rgb(20pt)=(0,0,0.494118); rgb(21pt)=(0,0,0.501961); rgb(22pt)=(0,0,0.505882); rgb(24pt)=(0,0,0.521569); rgb(25pt)=(0,0,0.52549); rgb(26pt)=(0,0,0.533333); rgb(27pt)=(0,0,0.537255); rgb(28pt)=(0,0,0.545098); rgb(29pt)=(0,0,0.54902); rgb(31pt)=(0,0,0.564706); rgb(32pt)=(0,0,0.568627); rgb(33pt)=(0,0,0.576471); rgb(34pt)=(0,0,0.580392); rgb(35pt)=(0,0,0.588235); rgb(36pt)=(0,0,0.592157); rgb(37pt)=(0,0,0.6); rgb(38pt)=(0,0,0.603922); rgb(40pt)=(0,0,0.619608); rgb(41pt)=(0,0,0.623529); rgb(42pt)=(0,0,0.631373); rgb(43pt)=(0,0,0.635294); rgb(44pt)=(0,0,0.643137); rgb(45pt)=(0,0,0.647059); rgb(47pt)=(0,0,0.662745); rgb(48pt)=(0,0,0.666667); rgb(49pt)=(0,0,0.67451); rgb(50pt)=(0,0,0.678431); rgb(51pt)=(0,0,0.686275); rgb(52pt)=(0,0,0.690196); rgb(53pt)=(0,0,0.698039); rgb(54pt)=(0,0,0.701961); rgb(56pt)=(0,0,0.717647); rgb(57pt)=(0,0,0.721569); rgb(58pt)=(0,0,0.729412); rgb(59pt)=(0,0,0.733333); rgb(60pt)=(0,0,0.741176); rgb(61pt)=(0,0,0.745098); rgb(63pt)=(0,0,0.760784); rgb(64pt)=(0.00392157,0.00392157,0.764706); rgb(72pt)=(0.129412,0.129412,0.796078); rgb(73pt)=(0.145098,0.145098,0.796078); rgb(90pt)=(0.411765,0.411765,0.862745); rgb(91pt)=(0.427451,0.427451,0.862745); rgb(109pt)=(0.709804,0.709804,0.933333); rgb(110pt)=(0.72549,0.72549,0.933333); rgb(127pt)=(0.992157,0.992157,1); rgb(128pt)=(1,0.992157,0.992157); rgb(160pt)=(0.87451,0.490196,0.490196); rgb(161pt)=(0.87451,0.47451,0.47451); rgb(191pt)=(0.756863,0.00392157,0.00392157); rgb(192pt)=(0.752941,0,0); rgb(193pt)=(0.745098,0,0); rgb(194pt)=(0.741176,0,0); rgb(195pt)=(0.733333,0,0); rgb(196pt)=(0.729412,0,0); rgb(197pt)=(0.721569,0,0); rgb(198pt)=(0.717647,0,0); rgb(199pt)=(0.709804,0,0); rgb(200pt)=(0.705882,0,0); rgb(201pt)=(0.698039,0,0); rgb(202pt)=(0.694118,0,0); rgb(203pt)=(0.686275,0,0); rgb(204pt)=(0.682353,0,0); rgb(206pt)=(0.666667,0,0); rgb(207pt)=(0.662745,0,0); rgb(208pt)=(0.654902,0,0); rgb(209pt)=(0.65098,0,0); rgb(210pt)=(0.643137,0,0); rgb(211pt)=(0.639216,0,0); rgb(212pt)=(0.631373,0,0); rgb(213pt)=(0.627451,0,0); rgb(214pt)=(0.619608,0,0); rgb(215pt)=(0.615686,0,0); rgb(216pt)=(0.607843,0,0); rgb(217pt)=(0.603922,0,0); rgb(219pt)=(0.588235,0,0); rgb(220pt)=(0.584314,0,0); rgb(221pt)=(0.576471,0,0); rgb(222pt)=(0.572549,0,0); rgb(223pt)=(0.564706,0,0); rgb(224pt)=(0.560784,0,0); rgb(225pt)=(0.552941,0,0); rgb(226pt)=(0.54902,0,0); rgb(227pt)=(0.541176,0,0); rgb(228pt)=(0.537255,0,0); rgb(229pt)=(0.529412,0,0); rgb(230pt)=(0.52549,0,0); rgb(232pt)=(0.509804,0,0); rgb(233pt)=(0.505882,0,0); rgb(234pt)=(0.498039,0,0); rgb(235pt)=(0.494118,0,0); rgb(236pt)=(0.486275,0,0); rgb(237pt)=(0.482353,0,0); rgb(238pt)=(0.47451,0,0); rgb(239pt)=(0.470588,0,0); rgb(240pt)=(0.462745,0,0); rgb(241pt)=(0.458824,0,0); rgb(242pt)=(0.45098,0,0); rgb(243pt)=(0.447059,0,0); rgb(245pt)=(0.431373,0,0); rgb(246pt)=(0.427451,0,0); rgb(247pt)=(0.419608,0,0); rgb(248pt)=(0.415686,0,0); rgb(249pt)=(0.407843,0,0); rgb(250pt)=(0.403922,0,0); rgb(251pt)=(0.396078,0,0); rgb(252pt)=(0.392157,0,0); rgb(253pt)=(0.384314,0,0); rgb(254pt)=(0.380392,0,0); rgb(255pt)=(0.372549,0,0)},
colorbar,
colorbar style={
		at={(1.06,1.0)},
		width=0.035\columnwidth,
        ytick={0,0.5,1},
        yticklabel style={
            font=\tiny,
        }
    }
]
\addplot [forget plot] graphics [xmin=0.5, xmax=10.5, ymin=0.5, ymax=10.5] {./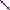};
\addplot [color=mycolor1, draw=none, mark size=1.7pt, mark=*, mark options={solid, mycolor1}]
  table[row sep=crcr]{%
1	1\\
1	2\\
1	3\\
1	4\\
1	5\\
1	6\\
1	7\\
1	8\\
1	9\\
1	10\\
1	11\\
2	1\\
2	2\\
2	3\\
2	4\\
2	5\\
2	6\\
2	7\\
2	8\\
2	9\\
2	10\\
2	11\\
3	1\\
3	2\\
3	3\\
3	4\\
3	5\\
3	6\\
3	7\\
3	8\\
3	9\\
3	10\\
3	11\\
4	1\\
4	2\\
4	3\\
4	4\\
4	5\\
4	6\\
4	7\\
4	8\\
4	9\\
5	1\\
5	2\\
5	3\\
5	4\\
5	5\\
5	6\\
5	7\\
5	8\\
5	9\\
6	1\\
6	2\\
6	3\\
6	4\\
6	5\\
6	6\\
6	7\\
6	8\\
6	9\\
7	1\\
7	2\\
7	3\\
7	4\\
7	5\\
7	6\\
7	7\\
7	8\\
7	9\\
8	1\\
8	2\\
8	3\\
8	4\\
8	5\\
8	6\\
8	7\\
8	8\\
9	1\\
9	2\\
9	3\\
9	4\\
9	5\\
9	6\\
9	7\\
10	1\\
10	2\\
10	3\\
10	4\\
10	5\\
10	6\\
10	7\\
11	1\\
11	2\\
11	3\\
11	4\\
11	5\\
11	6\\
11	7\\
12	1\\
13	1\\
};

\addplot [color=black, line width=2.0pt]
  table[row sep=crcr]{%
0.5	10.5\\
10.5	10.5\\
};

\addplot [color=black, line width=2.0pt]
  table[row sep=crcr]{%
0.5	0.5\\
10.5	0.5\\
};

\addplot [color=black, line width=2.0pt]
  table[row sep=crcr]{%
0.5	0.5\\
0.5	10.5\\
};

\addplot [color=black, line width=2.0pt]
  table[row sep=crcr]{%
10.5	0.5\\
10.5	10.5\\
};

\end{axis}
\end{tikzpicture}%
  \caption{\label{fig:section}{\em Left}: The $k=100$ frequencies involved in the construction of a $10\times 10$ functional map matrix $\C$ correspond to an irregular sampling of the Laplacian spectrum of the product manifold. {\em Right}: In turn, only some of the coefficients $c_{ij}$ of matrix $\C$ appear among the {\em first} $k$ expansion coefficients $p_{ij}$ of the map in the product eigenbasis. Here $\C$ is framed in black, while the blue dots identify the first $k$ coefficients $p_{ij}$.}
\end{figure}

\section{Spectral map processing}
In this paper, we consider curves and surfaces as our shapes. Despite their different intrinsic dimensions, our framework applies to both without specific adjustment.

\paragraph*{Localized spectral encoding.}
%


Theorem~\ref{thm:equivalence} establishes the equivalence between the soft functional map $T_{\mu}$ representation coefficients $c_{ij}$ in the bases $\{ \phi_i\}_{i\geq 1} \subseteq \mathcal{F}(\M)$ and $\{ \psi_j\}_{j\geq 1} \subseteq \mathcal{F}(\N)$ and the coefficients $p_{\ell}$ of the underlying density $\mu$ Fourier series~(\ref{eq:ok}) in the eigenbasis $\{ \xi_{\ell}\}_{\ell\geq 1} \subseteq \mathcal{F}(\M\times \N)$ of the product manifold Laplacian $\Delta_{\M \times \N}$. This equivalence directly stems from $\xi_{\ell}$'s having the separable form $\phi_i \wedge \psi_j$, by virtue of Theorem~\ref{thrm:prodevecs}.
It may be advantageous, however, to consider different orthonormal bases on $\M \times \N$ that are not necessarily separable.
In particular, we observe that $\mu$ tends to be localized on the product manifold $\M \times \N$ (see Figure~\ref{fig:approx}), and thus the standard outer product basis is extremely wasteful as it is supported on the entire $\M \times \N$.

%
A better alternative is the use of {\em localized manifold harmonics} \cite{choukroun2016elliptic,lmh}. Assume that we are given a rough indication of the support of $\mu$ \rev{(for example, coming from a shape matching algorithm)} in the form of a step potential function
\begin{equation}
V(x,y) = \left\{
\begin{array}{ll}
			\nu    & \mu(x,y) \approx 0; \\
			0 & \mathrm{otherwise.}
		\end{array}
\right.
\end{equation}
where $\nu\geq 1$. Then, the variational problem
\begin{eqnarray}
\min_{ \xi_1, \hdots, \xi_k } && \sum_{\ell=1}^k \int_{\M\times \N} \left(\| \nabla_{\M\times \N} \xi_{\ell} \|^2_{g_\M \oplus g_\N} + V \xi_{\ell}^2\right)\,\d a \\
\mathrm{s.t.} && \langle \xi_{\ell}, \xi_{\ell'}\rangle_{\M\times \N} = \delta_{\ell,\ell'} \nonumber
\end{eqnarray}
produces a set of orthonormal functions denoted by $\hat{\xi}_1,\hdots, \hat{\xi}_k$ that, for a sufficiently large value of $\nu$, are also localized in the support of $V$. Note that this new basis $\{ \hat{\xi}_{\ell}\}_{\ell=1}^k$ is {\em no \rev{longer} separable}, i.e., the functions $\hat{\xi}$ are not in general expressible as outer products of functions defined on the originating domains. See Figures~\ref{fig:bases} and \ref{fig:proj} for an illustration, and Figures~~\ref{fig:support} and \ref{fig:delta3d} for practical examples.

\begin{figure*}[t]
  \centering
  \begin{overpic}
  [trim=0cm 0cm 0cm 0cm,clip,width=0.9\linewidth]{./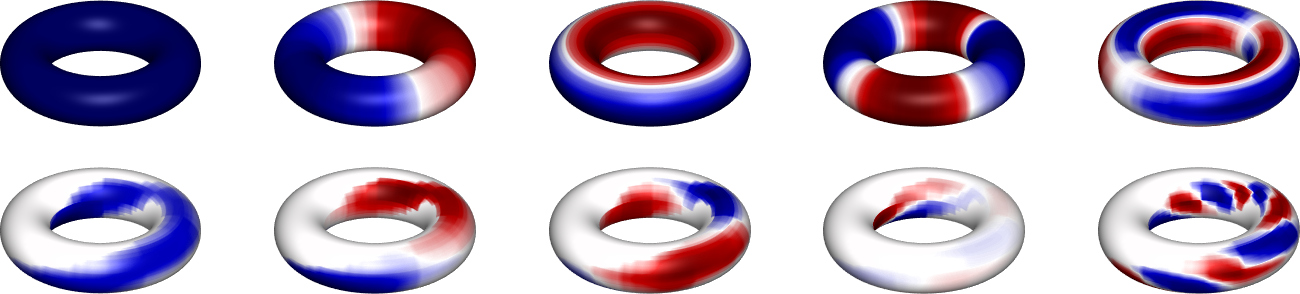}
  \put(2,-2){\footnotesize Eigenfunction $1$}
  \put(28,-2){\footnotesize $3$}
  \put(49,-2){\footnotesize $5$}
  \put(70,-2){\footnotesize $10$}
  \put(92,-2){\footnotesize $20$}
  \end{overpic}
  \vspace{0.16cm}
  \caption{\label{fig:bases}Basis functions on the product manifold (here visualized as a torus embedded in $\mathbb{R}^3$) of two 1D shapes. We plot a few standard LB eigenfunctions (top row) and localized manifold harmonics (bottom row). Here and in the following, we use the present color scheme (blue denotes small values, red large values, white is zero).}
\end{figure*}
\begin{figure}[bt]
  \centering
  \begin{overpic}
  [trim=0cm 0cm 0cm 0cm,clip,width=0.29\linewidth]{./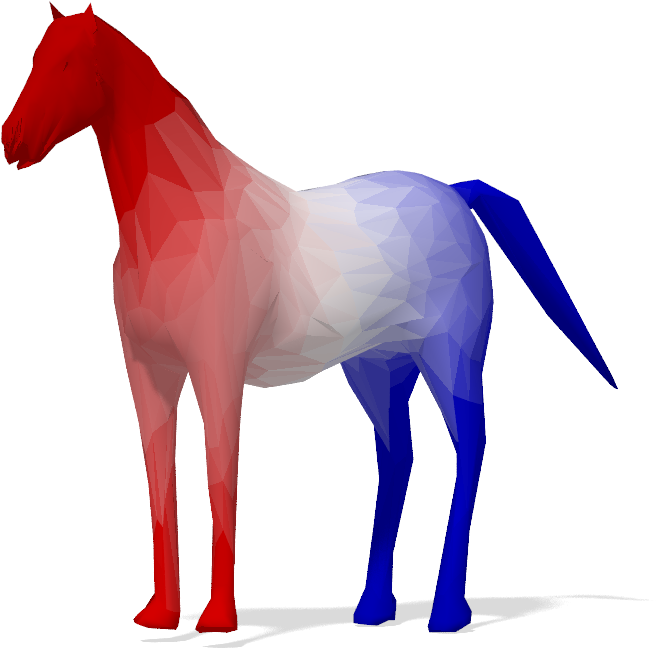}
  \end{overpic}
  \begin{overpic}
  [trim=0cm 0cm 0cm 0cm,clip,width=0.29\linewidth]{./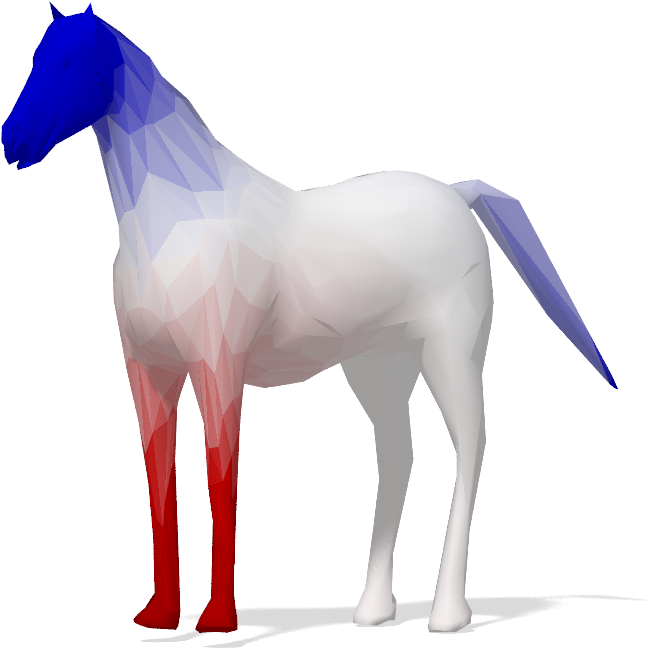}
  \end{overpic}
  \begin{overpic}
  [trim=0cm 0cm 0cm 0cm,clip,width=0.29\linewidth]{./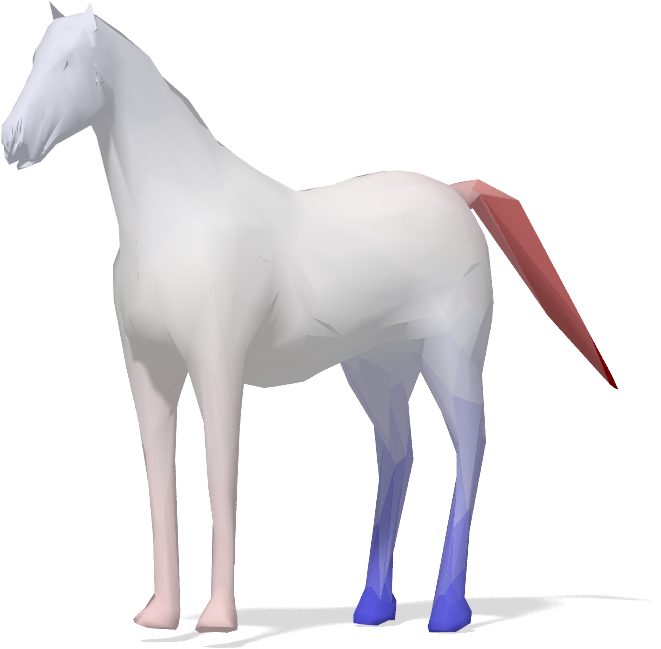}
  \end{overpic}
  \begin{overpic}
  [trim=0cm 0cm 0cm 0cm,clip,width=0.29\linewidth]{./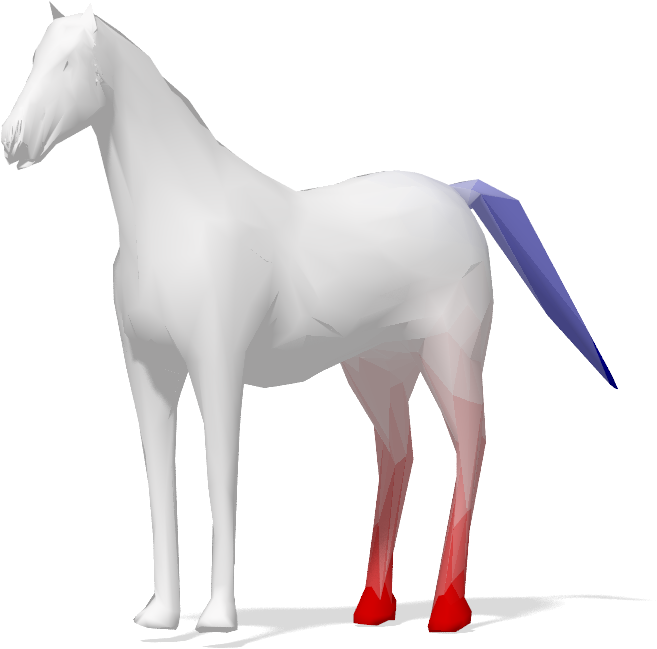}
  \end{overpic}
  \begin{overpic}
  [trim=0cm 0cm 0cm 0cm,clip,width=0.29\linewidth]{./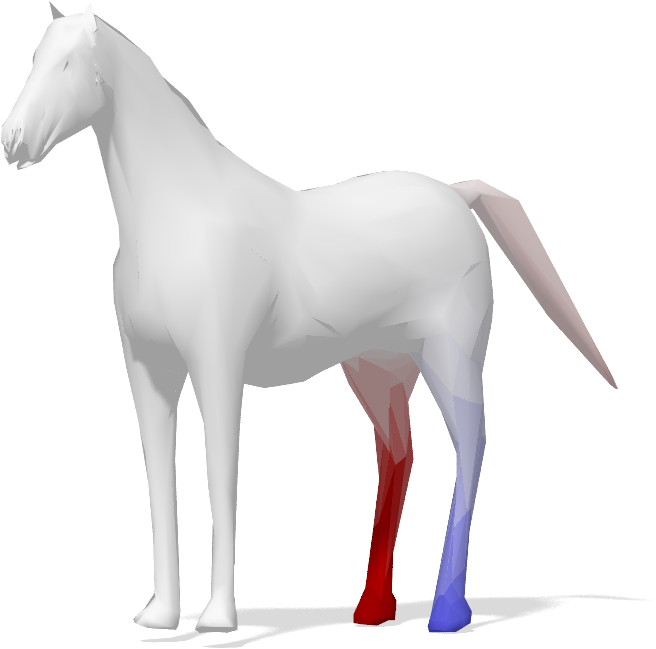}
  \end{overpic}
  \begin{overpic}
  [trim=0cm 0cm 0cm 0cm,clip,width=0.29\linewidth]{./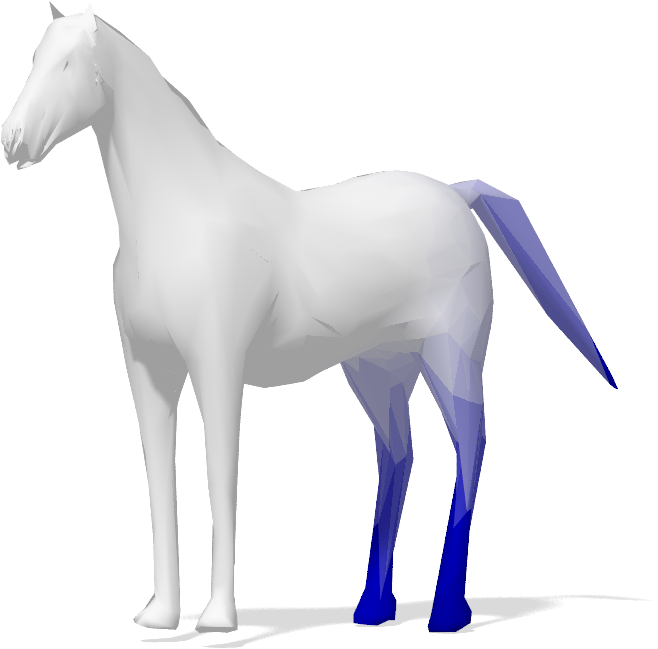}
  \end{overpic}
  \caption{\label{fig:proj}Projecting the basis functions on the product manifold of horse and elephant back onto the factor shapes (here only the horse projection is visualized). {\em Top row}: Projection of three product LB eigenfunctions, which correspond exactly to three standard LB eigenfunctions on the horse shape. {\em Bottom row}: Projection of three localized harmonics; these projections do {\em not} correspond to any LB eigenfunction on the horse. Still, note how they capture the geometric features of the underlying shape.}
\end{figure}
The basis $\{ \hat{\xi}_{\ell}\}_{\ell=1}^k$ turns out to be the eigenbasis of the {\em Hamiltonian operator} \cite{choukroun2016elliptic} $H = \Delta_{\M\times \N} + V$ and can be computed by the eigendecomposition of the product Laplacian matrix with the addition of diagonal potential. 
The size of such problem can be huge (if the shapes are discretized with $n \sim 10^3$ points, the product Laplacian matrix has size $n^2\times n^2 = 10^6 \times 10^6$; see Theorem~\ref{thm:LB1D}), and despite its extreme sparsity, computationally expensive.

As an alternative, we consider a {\em patch} $\mathcal{P} \subset \M \times \N$ of the product manifold \final{with boundary $\partial \mathcal{P}$} corresponding to \rev{$\mu(x,y)>0$}, and define the eigenproblem
\begin{eqnarray}\label{eq:patchbc}
\begin{array}{ll}
\Delta_{\mathcal{P}} \bar{\xi}_{\ell}(x,y) = \gamma_{\ell} \bar{\xi}_{\ell}(x,y) & (x,y) \in \mathrm{int}(\mathcal{P})\\
\bar{\xi}_{\ell}(x,y) = 0 & (x,y) \in  \partial \mathcal{P} 
		\end{array}
\end{eqnarray}
%
of the {\em product patch Laplacian} $\Delta_{\mathcal{P}}$ with \rev{homogeneous} Dirichlet boundary conditions.
%
%
\rev{In practice, this is implemented by constructing the stiffness and mass matrices $\mathbf{W}_{\mathrm{int}(\mathcal{P})}, \mathbf{S}_{\mathrm{int}(\mathcal{P})}$ by selecting the rows and columns of $\mathbf{W}_{\M\times\N}, \mathbf{S}_{\M\times\N}$ that correspond to the vertices in $\mathrm{int}(\mathcal{P})$. A generalized eigenproblem \final{using $\mathbf{W}_{\mathrm{int}(\mathcal{P})}, \mathbf{S}_{\mathrm{int}(\mathcal{P})}$} is solved, yielding eigenfunctions $\bar{\xi}_{\mathrm{int}(\mathcal{P})}$ defined on $\mathrm{int}(\mathcal{P})$; the final eigenfunctions $\bar{\xi}$ on the entire patch $\mathcal{P}$ are obtained by setting $\bar{\xi}(x)=\xi_{\mathrm{int}(\mathcal{P})}$ for $x\in\mathrm{int}(\mathcal{P})$ and $\bar{\xi}(x)=0$ for $x\in\partial\mathcal{P}$.}

If the patch is selected in such a way that its size scales as $\mathcal{O}(n)$ rather than $\mathcal{O}(n^2)$ in the size of the shapes (in practice, this can be achieved by taking a fixed-size band around the initial correspondence), the computation of the localized basis $\{ \bar{\xi}_{\ell} \}_{\ell=1}^k$ has the same complexity as eigendecomposition of the individual Laplacians $\Delta_\M, \Delta_\N$.
\rev{An example application of this construction is described next.}

\rev{Despite the computational gains of working with patches $\mathcal{P}\subset\M\times\N$, computing the eigen-decomposition of the full Hamiltonian $\Delta_{\M\times\N}+V$ may still be useful in certain settings. Note, in particular, that one may define a {\em soft} potential $V(x,y) = 1-\mu(x,y)$ \cite{lmh} directly reflecting the reliability of the underlying map in terms of its density. Further, it is also possible to define a {\em patch Hamiltonian} $\Delta_\mathcal{P} + V|_\mathcal{P}$ with soft potential if desired.}

\begin{figure*}[bt]
  \centering
\begin{minipage}{0.2\linewidth}
\vspace{5mm}
\setlength\figureheight{0.75\linewidth}
\setlength\figurewidth{\linewidth}
%
%
\definecolor{mycolor1}{rgb}{0.00000,0.33333,0.83333}%
\definecolor{mycolor2}{rgb}{0.00000,0.66667,0.66667}%
\definecolor{mycolor3}{rgb}{0.00000,1.00000,0.50000}%
\pgfplotsset{scaled x ticks=false}
\begin{tikzpicture}

\begin{axis}[%
width=\figurewidth,
height=\figureheight,
scale only axis,
xmin=0,
xmax=0.1,
xticklabels={0,0,0.02,0.04,0.06,0.08,0.1},
xlabel style={font=\color{white!15!black}},
xlabel style={at={(0.5,0.07)}},
xlabel={\footnotesize Geodesic error},
ymin=0,
ymax=100,
ylabel style={font=\color{white!15!black}},
ylabel style={at={(0.165,0.48)}},
ylabel={\footnotesize \% Correspondences},
every x tick label/.append style={font=\color{black}, font=\tiny},
every y tick label/.append style={font=\color{black}, font=\tiny},
axis background/.style={fill=white},
axis x line*=bottom,
axis y line*=left,
xmajorgrids,
ymajorgrids,
title style={font=\bfseries, at={(0.49,0.91)}},
legend style={at={(0.97,0.03)}, anchor=south east, legend cell align=left, align=left, draw=white!15!black}
]
\addplot [color=blue, line width=2.0pt]
  table[row sep=crcr]{%
0	93\\
0.00400000000000489	93\\
0.00499999999999545	100\\
0.100999999999999	100\\
};
\addlegendentry{\footnotesize 1\%}

\addplot [color=mycolor1, line width=2.0pt]
  table[row sep=crcr]{%
0	62\\
0.00400000000000489	62\\
0.00499999999999545	77\\
0.00799999999999557	77\\
0.00900000000000034	83\\
0.0100000000000051	91\\
0.0120000000000005	91\\
0.0130000000000052	92\\
0.0139999999999958	97\\
0.0169999999999959	97\\
0.0180000000000007	98\\
0.0220000000000056	98\\
0.0229999999999961	99\\
0.0400000000000063	99\\
0.0409999999999968	100\\
0.100999999999999	100\\
};
\addlegendentry{\footnotesize 5\%}

\addplot [color=mycolor2, line width=2.0pt]
  table[row sep=crcr]{%
0	44\\
0.00400000000000489	44\\
0.00499999999999545	53\\
0.007000000000005	53\\
0.00799999999999557	54\\
0.00900000000000034	57\\
0.0100000000000051	68\\
0.0120000000000005	68\\
0.0130000000000052	69\\
0.0139999999999958	73\\
0.0150000000000006	75\\
0.0169999999999959	75\\
0.0180000000000007	78\\
0.0190000000000055	82\\
0.0220000000000056	82\\
0.0229999999999961	86\\
0.0250000000000057	88\\
0.0259999999999962	88\\
0.027000000000001	91\\
0.0280000000000058	92\\
0.0310000000000059	92\\
0.0319999999999965	94\\
0.0330000000000013	95\\
0.0360000000000014	95\\
0.0370000000000061	97\\
0.0430000000000064	97\\
0.0439999999999969	98\\
0.0460000000000065	98\\
0.046999999999997	99\\
0.0589999999999975	99\\
0.0600000000000023	100\\
0.100999999999999	100\\
};
\addlegendentry{\footnotesize 25\%}

\addplot [color=mycolor3, line width=2.0pt]
  table[row sep=crcr]{%
0	21\\
0.00400000000000489	21\\
0.00499999999999545	31\\
0.00799999999999557	31\\
0.00900000000000034	34\\
0.0100000000000051	42\\
0.0109999999999957	42\\
0.0130000000000052	44\\
0.0139999999999958	49\\
0.0150000000000006	53\\
0.0169999999999959	53\\
0.0180000000000007	58\\
0.0190000000000055	62\\
0.0210000000000008	64\\
0.0220000000000056	64\\
0.0229999999999961	71\\
0.0240000000000009	73\\
0.0259999999999962	75\\
0.027000000000001	80\\
0.0280000000000058	83\\
0.0289999999999964	84\\
0.0310000000000059	84\\
0.0319999999999965	85\\
0.0330000000000013	88\\
0.034000000000006	88\\
0.0349999999999966	89\\
0.0360000000000014	89\\
0.0379999999999967	93\\
0.0390000000000015	93\\
0.0409999999999968	95\\
0.0420000000000016	95\\
0.0430000000000064	96\\
0.0450000000000017	96\\
0.0460000000000065	97\\
0.0490000000000066	97\\
0.0510000000000019	99\\
0.0589999999999975	99\\
0.0600000000000023	100\\
0.100999999999999	100\\
};
\addlegendentry{\footnotesize 90\%}

\addplot [color=black, dotted, line width=1.5pt]
  table[row sep=crcr]{%
0	35\\
0.00400000000000489	35\\
0.00499999999999545	50\\
0.00799999999999557	50\\
0.00900000000000034	53\\
0.0100000000000051	61\\
0.0109999999999957	61\\
0.0130000000000052	63\\
0.0139999999999958	71\\
0.0150000000000006	72\\
0.0169999999999959	72\\
0.0180000000000007	76\\
0.0190000000000055	82\\
0.0210000000000008	84\\
0.0220000000000056	84\\
0.0229999999999961	88\\
0.0240000000000009	90\\
0.0259999999999962	92\\
0.027000000000001	95\\
0.0280000000000058	96\\
0.0330000000000013	96\\
0.034000000000006	97\\
0.0360000000000014	97\\
0.0370000000000061	98\\
0.0529999999999973	98\\
0.054000000000002	99\\
0.0589999999999975	99\\
0.0600000000000023	100\\
0.100999999999999	100\\
};
\addlegendentry{\footnotesize FM}

\end{axis}
\end{tikzpicture}
\end{minipage}\hspace{15mm}
\begin{minipage}{0.1\linewidth}
\centering
\setlength\figurewidth{\linewidth}
\vspace{3mm}
 \input{./1d_image_source.tikz}\\
 \small \hspace{-2mm}Source
  \vspace{15mm}
\end{minipage} \hspace{2mm}
\begin{minipage}{0.1\linewidth}
\centering
\setlength\figurewidth{\linewidth}
\input{./1d_image_fm.tikz}\vspace{5mm}\\
  \begin{overpic}
  [trim=0cm 0cm 0cm 0cm,clip,width=\linewidth]{./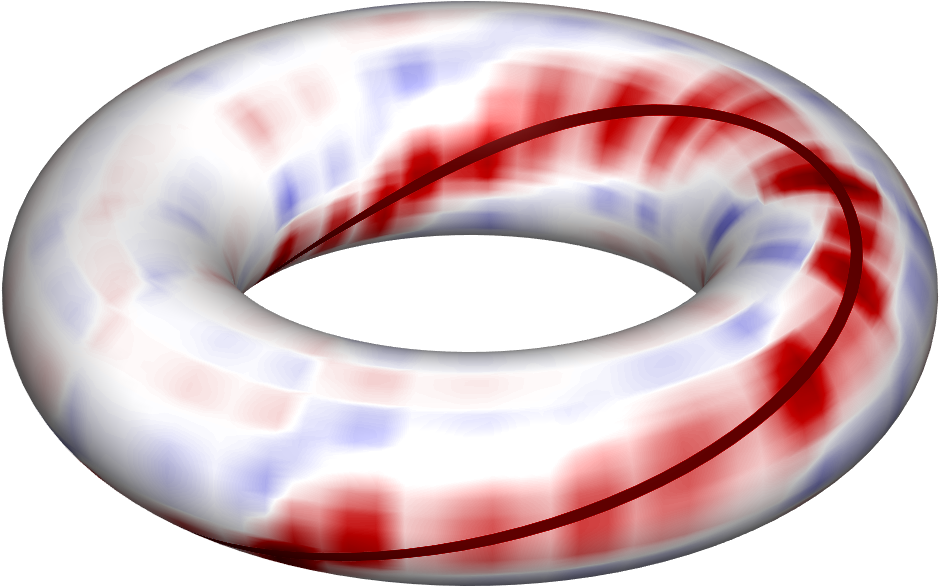}
  \end{overpic}
  \small FM
\end{minipage} \hspace{2mm}
\begin{minipage}{0.1\linewidth}
\centering
\setlength\figurewidth{\linewidth}
\input{./1d_image_90.tikz}\vspace{5mm}\\
  \begin{overpic}
  [trim=0cm 0cm 0cm 0cm,clip,width=\linewidth]{./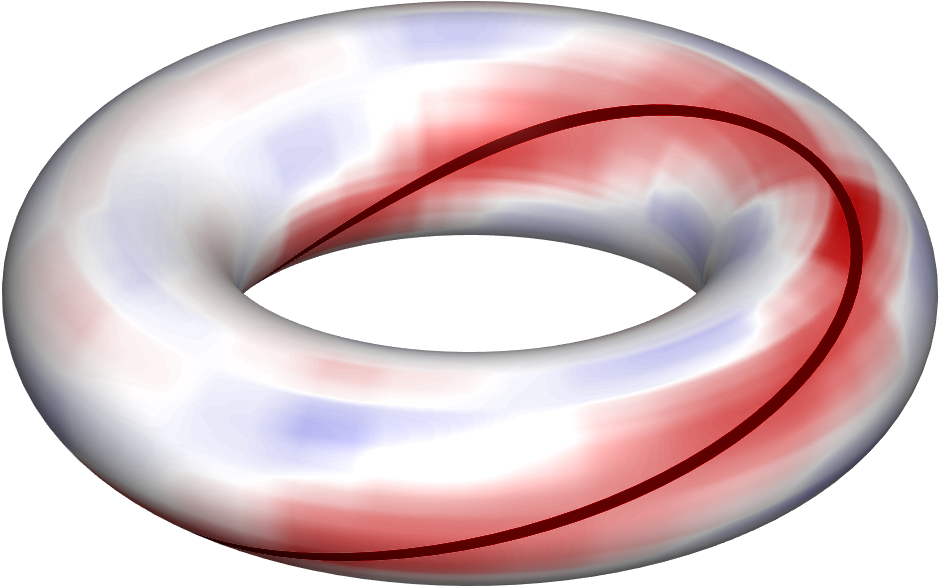}
  \end{overpic}
   \small 90\%
\end{minipage} \hspace{2mm}
\begin{minipage}{0.1\linewidth}
\centering
\setlength\figurewidth{\linewidth}
\input{./1d_image_25.tikz}\vspace{5mm}\\
  \begin{overpic}
  [trim=0cm 0cm 0cm 0cm,clip,width=\linewidth]{./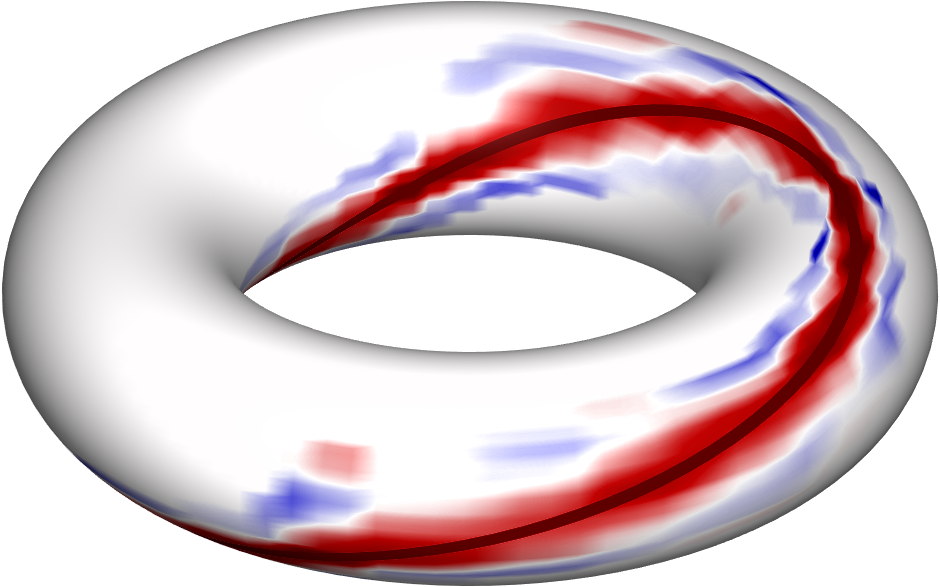}
  \end{overpic}
     \small 25\%
\end{minipage}\hspace{2mm}
\begin{minipage}{0.1\linewidth}
\centering
\setlength\figurewidth{\linewidth}
\input{./1d_image_5.tikz}\vspace{5mm}\\
  \begin{overpic}
  [trim=0cm 0cm 0cm 0cm,clip,width=\linewidth]{./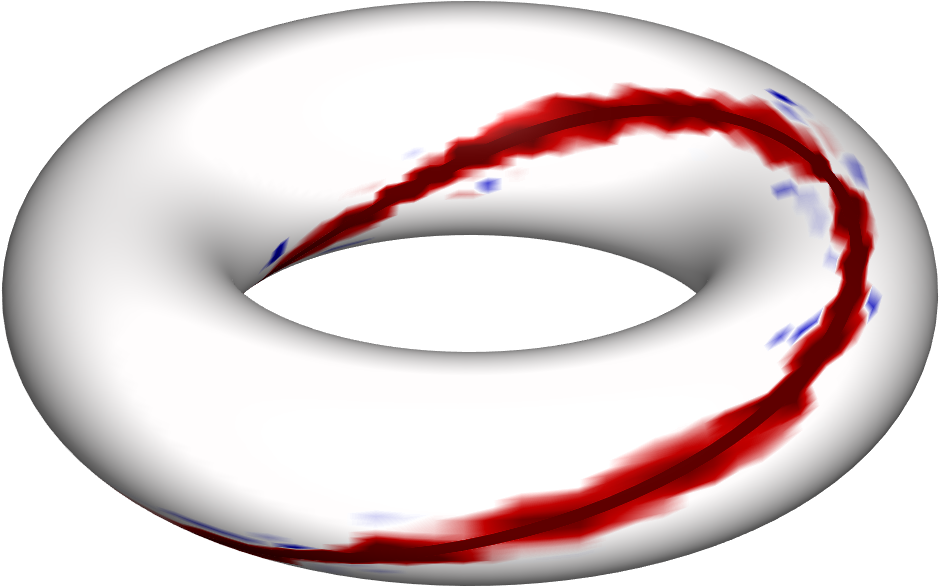}
  \end{overpic}
     \small 5\%
\end{minipage}\hspace{2mm}
\begin{minipage}{0.1\linewidth}
\centering
\setlength\figurewidth{\linewidth}
\input{./1d_image_1.tikz}\vspace{5mm}\\
  \begin{overpic}
  [trim=0cm 0cm 0cm 0cm,clip,width=\linewidth]{./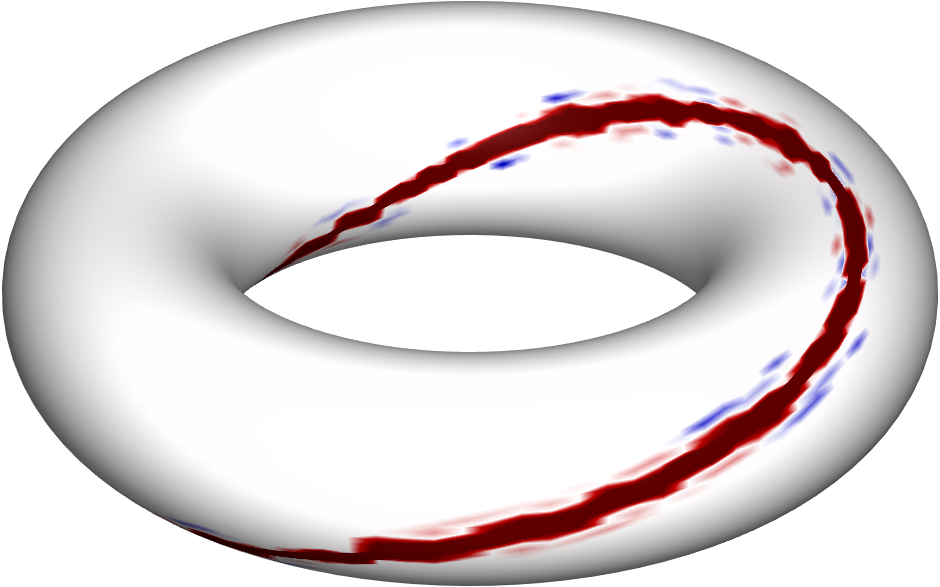}
  \end{overpic}
   \small 1\%
\end{minipage}
\vspace{-0.4cm}
  \caption{\label{fig:support} Product space approximation of the correspondence between one-dimensional shapes with $k=100$ basis functions. Bases constructed on bands of different size ($1\%$, $5\%$, $25\%$ and $90\%$ of the total product manifold area) around the true correspondence are shown. Separable basis (FM) is shown as a reference. Left: accuracy of the correspondence increases as the product space basis becomes more localized. Right (top row): image of a delta function by the functional maps. Right (bottom row): True correspondence (curve) and its approximation in inseparable product space bases with a varying degree of localization. The product manifold is depicted as a torus.}
\end{figure*}
%

%
%

%
\begin{figure}[bt]
  \centering
%
%
\definecolor{mycolor1}{rgb}{0.00000,0.33333,0.83333}%
\definecolor{mycolor2}{rgb}{0.00000,0.66667,0.66667}%
\definecolor{mycolor3}{rgb}{0.00000,1.00000,0.50000}%
\pgfplotsset{scaled x ticks=false}
\begin{tikzpicture}

\begin{axis}[%
width=0.40\columnwidth,
height=0.30\columnwidth,
scale only axis,
xmin=0,
xmax=0.01,
xticklabels={0,0,0.02,0.04,0.06,0.08,0.1},
xlabel style={font=\color{white!15!black}},
xlabel style={at={(0.5,0.07)}},
xlabel={\footnotesize Geodesic error},
ymin=0,
ymax=100,
ylabel style={font=\color{white!15!black}},
ylabel style={at={(0.165,0.48)}},
ylabel={\footnotesize \% Correspondences},
every x tick label/.append style={font=\color{black}, font=\tiny},
every y tick label/.append style={font=\color{black}, font=\tiny},
axis background/.style={fill=white},
axis x line*=bottom,
axis y line*=left,
xmajorgrids,
ymajorgrids,
title style={font=\bfseries, at={(0.49,0.91)}},
legend style={at={(0.97,0.03)}, anchor=south east, legend cell align=left, align=left, draw=white!15!black}
]
\addplot [color=mycolor3,solid,line width=2.0pt]
  table[row sep=crcr]{%
0	15.4295246038365\\
0.0005	24.3536280233528\\
0.001	45.7881567973311\\
0.0015	73.0608840700584\\
0.002	91.4095079232694\\
0.0025	96.9974979149291\\
0.003	98.999165971643\\
0.0035	99.7497914929108\\
0.004	99.7497914929108\\
0.0045	99.7497914929108\\
0.005	99.7497914929108\\
0.0055	99.7497914929108\\
0.006	99.7497914929108\\
0.0065	99.7497914929108\\
0.007	99.9165971643036\\
};
\addlegendentry{\footnotesize 15\%}
\addplot [color=mycolor1,solid,line width=2.0pt]
  table[row sep=crcr]{%
0	25.3544620517098\\
0.0005	46.5387823185988\\
0.001	71.4762301918265\\
0.0015	92.1601334445371\\
0.002	97.4145120934112\\
0.0025	100\\
};
\addlegendentry{\footnotesize 10\%}
\addplot [color=black,dotted,line width=1.5pt]
  table[row sep=crcr]{%
0	4.08673894912427\\
0.0005	7.83986655546289\\
0.001	23.3527939949958\\
0.0015	50.5421184320267\\
0.002	71.8932443703086\\
0.0025	89.0742285237698\\
0.003	94.9958298582152\\
0.0035	98.4153461217681\\
0.004	99.0825688073395\\
0.0045	99.4161801501251\\
0.005	99.6663886572143\\
0.0055	99.7497914929108\\
0.006	99.7497914929108\\
0.0065	99.7497914929108\\
0.007	99.9165971643036\\
};
\addlegendentry{\footnotesize FM}
\end{axis}
\end{tikzpicture}%
  \hspace{0.1cm}
  \begin{overpic}
  [trim=0cm 0cm 0cm 0cm,clip,width=0.4\linewidth]{./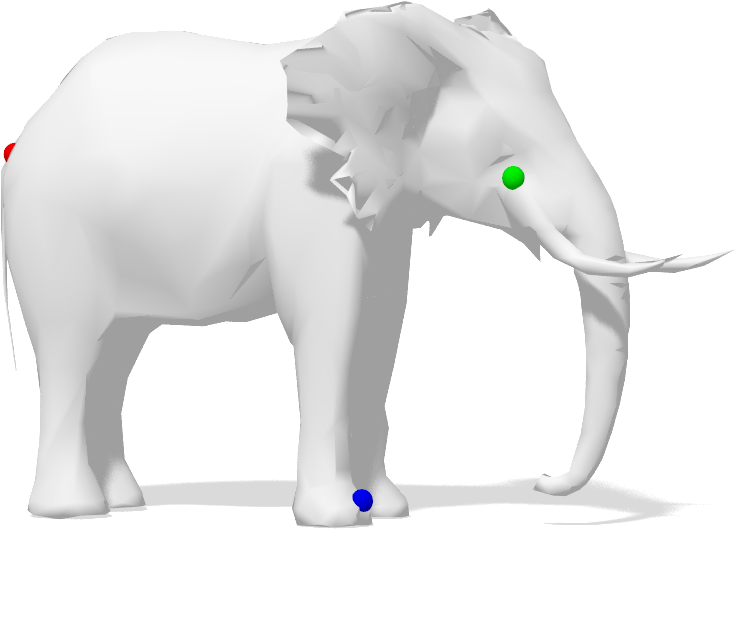}
  \put(27,90){\footnotesize source}
  \end{overpic}
  \begin{overpic}
  [trim=0cm 0cm 0cm 0cm,clip,width=0.32\linewidth]{./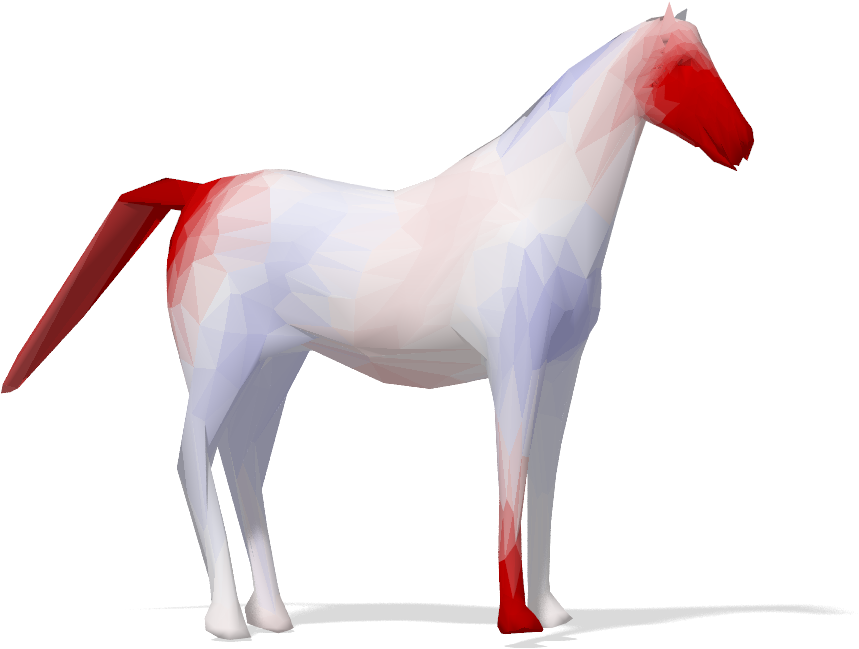}
  \put(36,-12){\footnotesize FM}
  \end{overpic}
  \begin{overpic}
  [trim=0cm 0cm 0cm 0cm,clip,width=0.32\linewidth]{./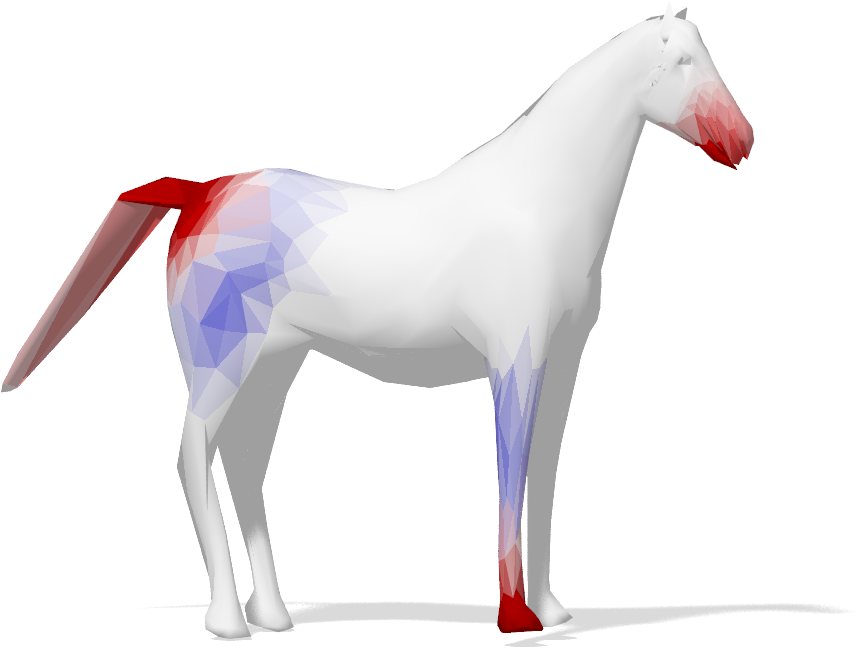}
  \put(36,-12){\footnotesize $15\%$}
  \end{overpic}
  \begin{overpic}
  [trim=0cm 0cm 0cm 0cm,clip,width=0.32\linewidth]{./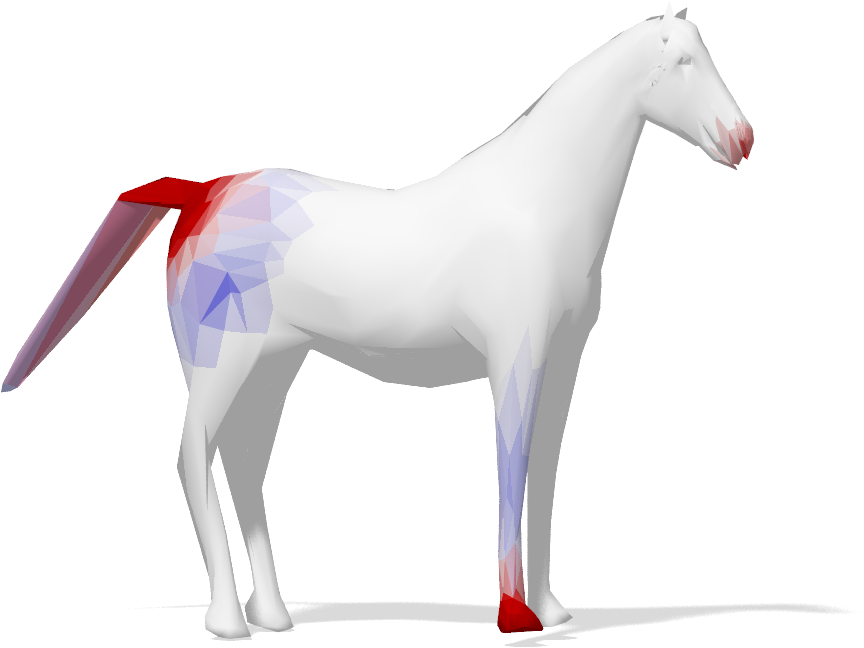}
  \put(36,-12){\footnotesize $10\%$}
  \end{overpic}
  \vspace{0.2cm}
  \caption{\label{fig:delta3d}Map approximation between surfaces with $k=500$ basis functions on bands of different size ($10\%$ and $15\%$ of the total 4D product manifold area) around the true correspondence. We also show images on the horse of delta functions supported at three points (red, green, blue) on the elephant. Here, the functional map (FM) was calculated using $30\times30=900$ basis functions.}
\end{figure}

\noindent\textbf{Example: Map refinement.}
As an illustrative application of our framework, we propose a simple procedure for \rev{map refinement: Given some initial, possibly sparse and noisy correspondence, the task is to produce a dense, denoised map}.



\rev{We follow an iterative approach. In each iteration $k$, the map is represented as a density $\mu^{(k)} : \M\times\N \to [0,1]$. This density is interpreted as a heat distribution throughout the iterations.

At the $k$-th iteration, a diffusion process is initialized with $u^{(k)}_{t=0}:=\mu^{(k)}$ and solved for a given diffusion time $T^{(k)}$ \final{on a patch $\mathcal{P}^{(k)}$. The initial patch can be given or be the entire product manifold}. The diffusion process has the effect of spreading correct correspondence information and therefore suppress mismatches, resulting in an effective map denoising approach akin to diffusion-based smoothing from image processing~\cite{scalespace83,peronamalik90}. The final heat distribution $u^{(k)}_{T}$ is thresholded to define a patch $\mathcal{P}^{(k)}\subset\M\times\N$ where the correct correspondence is likely to be contained, with likelihood expressed in terms of the diffused density.
We then recover a bijective (non-soft) density $\mu^{(k+1)}$ from $u^{(k)}_{T}$ by solving a linear assignment problem~\cite{bertsekas98} restricted to region $\mathcal{P}^{(k)}$, and use it to initialize the next iteration. 

These blur-and-sharpen steps are iterated until convergence while decreasing $T^{(k)}$, resulting in a sequence $\mathcal{P}^{(0)} \supseteq \cdots \supseteq \mathcal{P}^{(k)} \supseteq \mathcal{P}^{(k+1)}$. In practice, we decrease $T^{(k)}$  logarithmically across iterations. At $k=0$, the density $u^{(0)}$ is the given input, e.g., a mixture of Dirac deltas or a soft map.
%

The diffusion step in each iteration is realized via the spectral decomposition of the product patch Laplacian $\Delta_\mathcal{P}$ with Dirichlet boundary conditions \final{on $\mathcal{P} = \mathcal{P}^{(k)}$} \eqref{eq:patchbc}; for $p,q \in\M \times \N$:}
\begin{align}
u_T(p) &= \int_{\M\times\N} h_T (p,q) u_0(q) \mathrm{d}q\\
h_T(p,q) &= \sum_{\ell\ge 0} e^{-T \gamma_\ell} \bar{\xi}_\ell(p) \bar{\xi}_\ell(q)\label{eq:htpq}\,,
\end{align}
\rev{where $h_T$ is the heat kernel at time $T$ on the product manifold ${\M\times\N}$. Throughout the iterations we keep the number of eigenfunctions for the approximation \eqref{eq:htpq} constant.

The refinement process described above simultaneously improves the correspondence and reduces the support of the density around the most likely bijective map. 
%
%
This is similar in spirit to the kernel matching approaches of \cite{vestner2017product,vestnerefficient}, however, with the additional step of `carving out' the relevant portion $\mathcal{P}\subset\M\times\N$ throughout the iterations.

Illustrative results are reported qualitatively in Figure~\ref{fig:denoising} and quantitatively in Figure~\ref{fig:q1q2}.
}

\begin{figure}[bt]
  \centering
  \begin{overpic}
  [trim=0cm 0cm 0cm 0cm,clip,width=0.85\linewidth]{./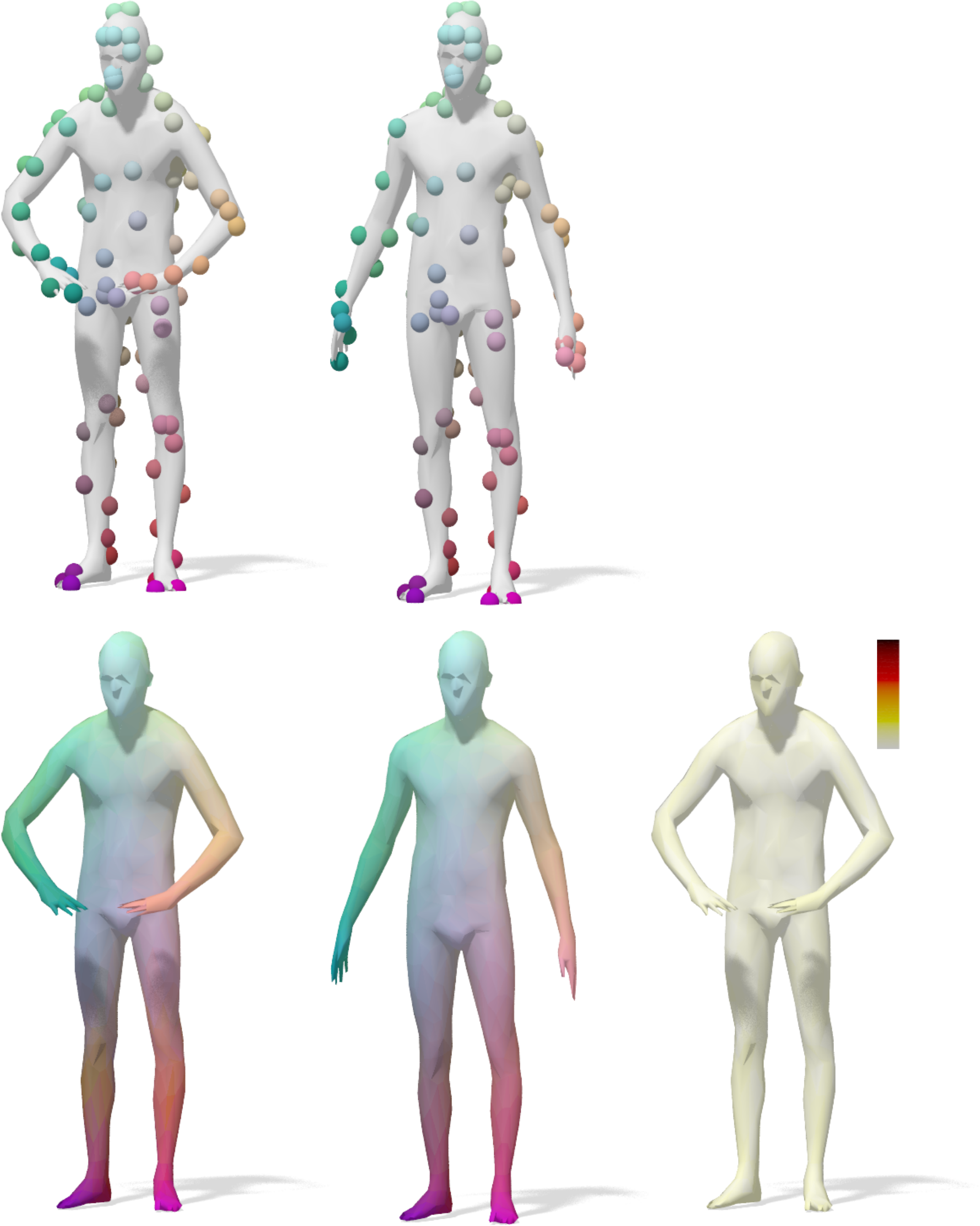}
  \put(74,46){\footnotesize 0.1}
  \put(74,39){\footnotesize 0}
  \end{overpic}
  \caption{\label{fig:denoising}Example of map refinement. We show the input sparse correspondence above and the recovered dense map below. The heatmap on the bottom right encodes geodesic error of the recovered correspondence.}
\end{figure}

\begin{figure}
  \centering
%
%
\definecolor{mycolor1}{rgb}{0.00000,0.44700,0.74100}%
\definecolor{mycolor2}{rgb}{0.85000,0.32500,0.09800}%
\definecolor{mycolor3}{rgb}{0.92900,0.69400,0.12500}%
\definecolor{mycolor4}{rgb}{0.49400,0.18400,0.55600}%
\definecolor{mycolor5}{rgb}{0.46600,0.67400,0.18800}%
\definecolor{mycolor6}{rgb}{0.30100,0.74500,0.93300}%
\definecolor{mycolor7}{rgb}{0.63500,0.07800,0.18400}%
\begin{tikzpicture}

\begin{axis}[%
width=.38\linewidth,
height=.5\linewidth,
scale only axis,
xmin=0,
xmax=0.1,
ymin=0,
ymax=100,
axis background/.style={fill=white},
xticklabels={0,0,0.02,0.04,0.06,0.08,0.1},
xlabel style={font=\color{white!15!black}},
xlabel style={at={(0.5,0.05)}},
xlabel={\footnotesize Geodesic error},
ylabel style={font=\color{white!15!black}},
ylabel style={at={(0.19,0.48)}},
ylabel={\footnotesize \% Correspondences},
every x tick label/.append style={font=\color{black}, font=\tiny},
every y tick label/.append style={font=\color{black}, font=\tiny},
axis background/.style={fill=white},
axis x line*=bottom,
axis y line*=left,
xmajorgrids,
ymajorgrids,
title style={font=\bfseries, at={(0.49,0.91)}},
legend style={at={(0.97,0.03)}, anchor=south east, legend cell align=left, align=left, draw=white!15!black, font=\tiny}
]


\addplot [color=mycolor2,solid,line width=1.5pt]
  table[row sep=crcr]{%
0	29.3\\
0.005	29.8\\
0.01	35.9\\
0.015	49.6\\
0.02	56.9\\
0.025	62.9\\
0.03	68\\
0.035	72.7\\
0.04	76.8\\
0.045	79.9\\
0.05	83.5\\
0.055	86.5\\
0.06	89\\
0.065	91.7\\
0.07	93.4\\
0.075	95.1\\
0.08	96.3\\
0.085	96.6\\
0.09	97.1\\
0.095	97.8\\
0.1	97.9\\
0.105	98.1\\
0.11	98.2\\
0.115	98.3\\
0.12	98.5\\
0.125	99.2\\
0.13	99.5\\
0.135	99.8\\
0.14	99.9\\
0.145	99.9\\
0.15	100\\
0.155	100\\
0.16	100\\
0.165	100\\
0.17	100\\
0.175	100\\
0.18	100\\
0.185	100\\
0.19	100\\
0.195	100\\
0.2	100\\
0.205	100\\
0.21	100\\
0.215	100\\
0.22	100\\
0.225	100\\
0.23	100\\
0.235	100\\
0.24	100\\
0.245	100\\
0.25	100\\
};
\addlegendentry{$(0.1,0.01)$};

\addplot [color=mycolor3,solid,line width=1.5pt]
  table[row sep=crcr]{%
0	31\\
0.005	31.4\\
0.01	40\\
0.015	53.8\\
0.02	61.8\\
0.025	68.2\\
0.03	71.9\\
0.035	76.2\\
0.04	80\\
0.045	83.6\\
0.05	86.2\\
0.055	89.5\\
0.06	91.8\\
0.065	93.5\\
0.07	94.7\\
0.075	95.9\\
0.08	97.6\\
0.085	98.2\\
0.09	98.5\\
0.095	98.6\\
0.1	98.8\\
0.105	99\\
0.11	99.4\\
0.115	99.5\\
0.12	99.6\\
0.125	99.9\\
0.13	99.9\\
0.135	100\\
0.14	100\\
0.145	100\\
0.15	100\\
0.155	100\\
0.16	100\\
0.165	100\\
0.17	100\\
0.175	100\\
0.18	100\\
0.185	100\\
0.19	100\\
0.195	100\\
0.2	100\\
0.205	100\\
0.21	100\\
0.215	100\\
0.22	100\\
0.225	100\\
0.23	100\\
0.235	100\\
0.24	100\\
0.245	100\\
0.25	100\\
};
\addlegendentry{$(1,0.01)$};

\addplot [color=mycolor4,solid,line width=1.5pt]
  table[row sep=crcr]{%
0	33.8\\
0.005	34.5\\
0.01	43.3\\
0.015	57.8\\
0.02	65.7\\
0.025	70.8\\
0.03	75.8\\
0.035	79.4\\
0.04	82.7\\
0.045	85.2\\
0.05	87.8\\
0.055	90.5\\
0.06	93.1\\
0.065	94.4\\
0.07	96\\
0.075	97\\
0.08	98.3\\
0.085	98.8\\
0.09	99\\
0.095	99.4\\
0.1	99.4\\
0.105	99.8\\
0.11	99.8\\
0.115	99.8\\
0.12	99.9\\
0.125	99.9\\
0.13	99.9\\
0.135	100\\
0.14	100\\
0.145	100\\
0.15	100\\
0.155	100\\
0.16	100\\
0.165	100\\
0.17	100\\
0.175	100\\
0.18	100\\
0.185	100\\
0.19	100\\
0.195	100\\
0.2	100\\
0.205	100\\
0.21	100\\
0.215	100\\
0.22	100\\
0.225	100\\
0.23	100\\
0.235	100\\
0.24	100\\
0.245	100\\
0.25	100\\
};
\addlegendentry{$(10,0.1)$};

\addplot [color=mycolor5,solid,line width=1.5pt]
  table[row sep=crcr]{%
0	35.6\\
0.005	36\\
0.01	44\\
0.015	59.1\\
0.02	69.1\\
0.025	74.3\\
0.03	78.9\\
0.035	82.6\\
0.04	85.3\\
0.045	88.3\\
0.05	90.3\\
0.055	92\\
0.06	93.8\\
0.065	95.4\\
0.07	96.7\\
0.075	97.5\\
0.08	98.6\\
0.085	99.1\\
0.09	99.6\\
0.095	99.7\\
0.1	99.7\\
0.105	99.8\\
0.11	99.8\\
0.115	99.8\\
0.12	99.8\\
0.125	99.9\\
0.13	99.9\\
0.135	100\\
0.14	100\\
0.145	100\\
0.15	100\\
0.155	100\\
0.16	100\\
0.165	100\\
0.17	100\\
0.175	100\\
0.18	100\\
0.185	100\\
0.19	100\\
0.195	100\\
0.2	100\\
0.205	100\\
0.21	100\\
0.215	100\\
0.22	100\\
0.225	100\\
0.23	100\\
0.235	100\\
0.24	100\\
0.245	100\\
0.25	100\\
};
\addlegendentry{$(100,1)$};

\addplot [color=mycolor6,solid,line width=1.5pt]
  table[row sep=crcr]{%
0	10.6\\
0.005	10.7\\
0.01	14.8\\
0.015	21.7\\
0.02	29.4\\
0.025	37.4\\
0.03	43.5\\
0.035	48.7\\
0.04	53.7\\
0.045	58.5\\
0.05	62.7\\
0.055	65.8\\
0.06	68.8\\
0.065	72.8\\
0.07	75.3\\
0.075	77.4\\
0.08	79.7\\
0.085	82.2\\
0.09	83.6\\
0.095	84.8\\
0.1	85.5\\
0.105	86.8\\
0.11	88.2\\
0.115	89.5\\
0.12	90.5\\
0.125	90.9\\
0.13	91.7\\
0.135	92\\
0.14	92.5\\
0.145	93\\
0.15	93.6\\
0.155	94.1\\
0.16	94.3\\
0.165	94.9\\
0.17	95.2\\
0.175	95.6\\
0.18	95.8\\
0.185	96.1\\
0.19	96.3\\
0.195	96.3\\
0.2	96.3\\
0.205	96.6\\
0.21	96.7\\
0.215	96.8\\
0.22	97.2\\
0.225	97.4\\
0.23	97.5\\
0.235	97.5\\
0.24	97.5\\
0.245	97.5\\
0.25	97.5\\
};
\addlegendentry{$(1000,0.01)$};

\addplot [color=mycolor7,solid,line width=1.5pt]
  table[row sep=crcr]{%
0	10.4\\
0.005	10.6\\
0.01	15.9\\
0.015	21\\
0.02	26.6\\
0.025	33.2\\
0.03	38.7\\
0.035	44.3\\
0.04	47.6\\
0.045	51.7\\
0.05	55.3\\
0.055	59.1\\
0.06	61.6\\
0.065	64.5\\
0.07	68.1\\
0.075	69.9\\
0.08	71.8\\
0.085	73.7\\
0.09	75\\
0.095	76.7\\
0.1	78.6\\
0.105	80.3\\
0.11	81.7\\
0.115	82.9\\
0.12	83.5\\
0.125	84.6\\
0.13	85.7\\
0.135	86.4\\
0.14	86.9\\
0.145	88\\
0.15	88.7\\
0.155	90.3\\
0.16	91.3\\
0.165	92.1\\
0.17	93.1\\
0.175	94.3\\
0.18	95.9\\
0.185	97.2\\
0.19	97.8\\
0.195	98.7\\
0.2	99.3\\
0.205	99.8\\
0.21	99.9\\
0.215	100\\
0.22	100\\
0.225	100\\
0.23	100\\
0.235	100\\
0.24	100\\
0.245	100\\
0.25	100\\
};
\addlegendentry{$(1000,10)$};

\end{axis}
\end{tikzpicture}%
%
%
\definecolor{mycolor1}{rgb}{0.00000,0.44700,0.74100}%
\definecolor{mycolor2}{rgb}{0.85000,0.32500,0.09800}%
\definecolor{mycolor3}{rgb}{0.92900,0.69400,0.12500}%
\definecolor{mycolor4}{rgb}{0.49400,0.18400,0.55600}%
\definecolor{mycolor5}{rgb}{0.46600,0.67400,0.18800}%
\definecolor{mycolor6}{rgb}{0.30100,0.74500,0.93300}%
\definecolor{mycolor7}{rgb}{0.63500,0.07800,0.18400}%
\begin{tikzpicture}

\begin{axis}[%
width=.38\linewidth,
height=.5\linewidth,
scale only axis,
xmin=0,
xmax=0.1,
ymin=0,
ymax=100,
axis background/.style={fill=white},
xticklabels={0,0,0.02,0.04,0.06,0.08,0.1},
xlabel style={font=\color{white!15!black}},
xlabel style={at={(0.5,0.05)}},
xlabel={\footnotesize Geodesic error},
ylabel style={font=\color{white!15!black}},
ylabel style={at={(0.09,0.48)}},
every x tick label/.append style={font=\color{black}, font=\tiny},
every y tick label/.append style={font=\color{black}, font=\tiny},
axis background/.style={fill=white},
axis x line*=bottom,
axis y line*=left,
xmajorgrids,
ymajorgrids,
title style={font=\bfseries, at={(0.49,0.91)}},
legend style={at={(0.97,0.03)}, anchor=south east, legend cell align=left, align=left, draw=white!15!black, font=\tiny}
]

\addplot [color=mycolor2,solid,line width=1.5pt]
  table[row sep=crcr]{%
0	19.828\\
0.005	20.104\\
0.01	25.335\\
0.015	34.775\\
0.02	42.271\\
0.025	48.418\\
0.03	53.865\\
0.035	58.383\\
0.04	62.357\\
0.045	66.236\\
0.05	69.61\\
0.055	72.693\\
0.06	75.204\\
0.065	77.573\\
0.07	79.655\\
0.075	81.318\\
0.08	82.899\\
0.085	83.938\\
0.09	84.87\\
0.095	85.622\\
0.1	86.26\\
0.105	86.982\\
0.11	87.613\\
0.115	88.195\\
0.12	88.774\\
0.125	89.3159999999999\\
0.13	89.872\\
0.135	90.288\\
0.14	90.749\\
0.145	91.09\\
0.15	91.37\\
0.155	91.716\\
0.16	92.017\\
0.165	92.304\\
0.17	92.596\\
0.175	92.896\\
0.18	93.111\\
0.185	93.329\\
0.19	93.508\\
0.195	93.681\\
0.2	93.826\\
0.205	93.949\\
0.21	94.055\\
0.215	94.17\\
0.22	94.25\\
0.225	94.322\\
0.23	94.397\\
0.235	94.449\\
0.24	94.5\\
0.245	94.56\\
0.25	94.616\\
};
\addlegendentry{0.1-0.01};

\addplot [color=mycolor3,solid,line width=1.5pt]
  table[row sep=crcr]{%
0	22.099\\
0.005	22.376\\
0.01	27.786\\
0.015	37.924\\
0.02	45.627\\
0.025	51.642\\
0.03	56.852\\
0.035	61.215\\
0.04	65.111\\
0.045	68.858\\
0.05	72.052\\
0.055	75.013\\
0.06	77.51\\
0.065	79.715\\
0.07	81.598\\
0.075	83.088\\
0.08	84.66\\
0.085	85.5820000000001\\
0.09	86.43\\
0.095	87.035\\
0.1	87.595\\
0.105	88.128\\
0.11	88.634\\
0.115	89.137\\
0.12	89.675\\
0.125	90.152\\
0.13	90.6\\
0.135	90.984\\
0.14	91.284\\
0.145	91.562\\
0.15	91.836\\
0.155	92.095\\
0.16	92.344\\
0.165	92.598\\
0.17	92.843\\
0.175	93.092\\
0.18	93.267\\
0.185	93.474\\
0.19	93.619\\
0.195	93.762\\
0.2	93.885\\
0.205	93.979\\
0.21	94.058\\
0.215	94.137\\
0.22	94.191\\
0.225	94.242\\
0.23	94.294\\
0.235	94.337\\
0.24	94.385\\
0.245	94.414\\
0.25	94.481\\
};
\addlegendentry{1-0.01};

\addplot [color=mycolor4,solid,line width=1.5pt]
  table[row sep=crcr]{%
0	24.536\\
0.005	24.904\\
0.01	30.859\\
0.015	41.611\\
0.02	49.438\\
0.025	55.515\\
0.03	60.8\\
0.035	65.2019999999999\\
0.04	69.031\\
0.045	72.964\\
0.05	76.061\\
0.055	78.936\\
0.06	81.385\\
0.065	83.665\\
0.07	85.583\\
0.075	87.046\\
0.08	88.444\\
0.085	89.368\\
0.09	90.178\\
0.095	90.768\\
0.1	91.208\\
0.105	91.656\\
0.11	92.04\\
0.115	92.459\\
0.12	92.935\\
0.125	93.3\\
0.13	93.653\\
0.135	93.927\\
0.14	94.146\\
0.145	94.334\\
0.15	94.531\\
0.155	94.752\\
0.16	94.93\\
0.165	95.137\\
0.17	95.313\\
0.175	95.474\\
0.18	95.579\\
0.185	95.727\\
0.19	95.8699999999999\\
0.195	95.977\\
0.2	96.093\\
0.205	96.182\\
0.21	96.262\\
0.215	96.327\\
0.22	96.379\\
0.225	96.429\\
0.23	96.48\\
0.235	96.508\\
0.24	96.551\\
0.245	96.599\\
0.25	96.644\\
};
\addlegendentry{10-0.1};

\addplot [color=mycolor5,solid,line width=1.5pt]
  table[row sep=crcr]{%
0	35.117\\
0.005	35.684\\
0.01	43.615\\
0.015	57.399\\
0.02	65.99\\
0.025	71.751\\
0.03	76.192\\
0.035	80.013\\
0.04	83.2649999999999\\
0.045	86.498\\
0.05	88.91\\
0.055	91.335\\
0.06	93.336\\
0.065	95.055\\
0.07	96.539\\
0.075	97.558\\
0.08	98.519\\
0.085	99.091\\
0.09	99.4910000000001\\
0.095	99.582\\
0.1	99.642\\
0.105	99.7589999999999\\
0.11	99.7839999999999\\
0.115	99.8299999999999\\
0.12	99.8929999999999\\
0.125	99.9329999999999\\
0.13	99.9549999999999\\
0.135	99.998\\
0.14	99.998\\
0.145	99.998\\
0.15	99.998\\
0.155	99.998\\
0.16	99.998\\
0.165	99.998\\
0.17	99.998\\
0.175	99.998\\
0.18	99.998\\
0.185	99.998\\
0.19	99.998\\
0.195	99.998\\
0.2	99.998\\
0.205	99.998\\
0.21	99.998\\
0.215	99.998\\
0.22	99.998\\
0.225	99.998\\
0.23	99.998\\
0.235	99.998\\
0.24	99.998\\
0.245	99.998\\
0.25	99.998\\
};
\addlegendentry{100-1};

\addplot [color=mycolor6,solid,line width=1.5pt]
  table[row sep=crcr]{%
0	9.283\\
0.005	9.585\\
0.01	13.703\\
0.015	19.156\\
0.02	24.658\\
0.025	29.577\\
0.03	34.259\\
0.035	38.67\\
0.04	42.516\\
0.045	46.453\\
0.05	49.704\\
0.055	53.034\\
0.06	55.902\\
0.065	58.52\\
0.07	61.125\\
0.075	63.042\\
0.08	65.024\\
0.085	66.631\\
0.09	68.14\\
0.095	69.384\\
0.1	70.602\\
0.105	71.857\\
0.11	73.096\\
0.115	74.304\\
0.12	75.526\\
0.125	76.693\\
0.13	77.871\\
0.135	78.868\\
0.14	79.865\\
0.145	80.8\\
0.15	81.738\\
0.155	82.612\\
0.16	83.421\\
0.165	84.199\\
0.17	84.931\\
0.175	85.729\\
0.18	86.511\\
0.185	87.237\\
0.19	87.897\\
0.195	88.545\\
0.2	89.203\\
0.205	89.782\\
0.21	90.301\\
0.215	90.819\\
0.22	91.245\\
0.225	91.669\\
0.23	92.079\\
0.235	92.456\\
0.24	92.789\\
0.245	93.141\\
0.25	93.428\\
};
\addlegendentry{1000-0.01};

\addplot [color=mycolor7,solid,line width=1.5pt]
  table[row sep=crcr]{%
0	12.55\\
0.005	12.859\\
0.01	17.848\\
0.015	24.936\\
0.02	31.634\\
0.025	37.3\\
0.03	42.382\\
0.035	47.129\\
0.04	51.29\\
0.045	55.649\\
0.05	58.985\\
0.055	62.335\\
0.06	65.246\\
0.065	68.062\\
0.07	70.782\\
0.075	72.784\\
0.08	74.74\\
0.085	76.441\\
0.09	77.896\\
0.095	79.179\\
0.1	80.411\\
0.105	81.538\\
0.11	82.663\\
0.115	83.801\\
0.12	84.903\\
0.125	85.939\\
0.13	87.045\\
0.135	87.947\\
0.14	89.03\\
0.145	89.884\\
0.15	90.594\\
0.155	91.407\\
0.16	92.006\\
0.165	92.688\\
0.17	93.354\\
0.175	94.026\\
0.18	94.7959999999999\\
0.185	95.401\\
0.19	95.962\\
0.195	96.439\\
0.2	96.862\\
0.205	97.221\\
0.21	97.4989999999999\\
0.215	97.742\\
0.22	97.914\\
0.225	98.0739999999999\\
0.23	98.185\\
0.235	98.271\\
0.24	98.34\\
0.245	98.378\\
0.25	98.43\\
};
\addlegendentry{\tiny $t\in[1000,10]$};
\legend{}
\end{axis}
\end{tikzpicture}%
  \caption{\label{fig:q1q2}Sensitivity of map refinement to heat diffusion times and noisy input. The legend reports diffusion time ranges $(t_\mathrm{max},t_\mathrm{min})$; within each range, time is decreased logarithmically over iterations. {\em Left:} The input is a sparse correspondence of $10\%$ of correct matches. We see that high diffusion times are detrimental due to the excessive spread of correspondence information. {\em Right:} The input sparse correspondence is further corrupted with $30\%$ random mismatches.}
\end{figure}
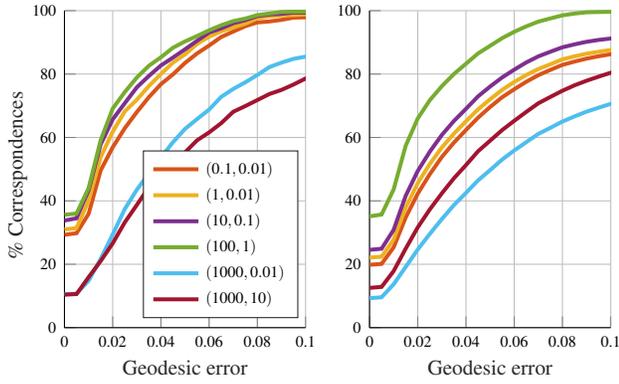

\section{Discussion and conclusions}
We introduced a novel perspective on map representation and processing, where pointwise, functional, and soft maps can be understood as densities on the product of the input shapes. We showed how to discretize the Laplace-Beltrami operator on the product manifold and proposed the adoption of (inseparable) localized harmonics for compactly encoding correspondences while ensuring minimal energy dispersion\rev{, i.e., the resulting harmonics are not `wasted' on portions of the product manifold that carry no information on the map to be encoded}. Our theoretical and applied contributions suggest a new perspective on properties of the correspondence manifold as well as new ways to pose algorithmic design for map inference and processing. 

\noindent\textbf{Limitations.}
Perhaps the main limitation of our framework lies in \rev{the scalability of our current numerical scheme}. While we showed that one can reduce the computational complexity to $\mathcal{O}(n)$ by appropriately selecting a localization region, \rev{in practical applications involving very noisy maps where the localization region tends to be spread out across the entire product manifold, the advantage might be less evident. For this reason,} considering as a possible extension higher-dimensional products to encode cycle-consistent maps in shape collections may soon become prohibitive. With the current approach we trade off scalability for accuracy: Maps are encoded much more precisely in the localized basis, but this requires the explicit computation of inseparable basis functions that do not admit an efficient representation in terms of outer products. \rev{As a possible remedy, an efficient solution to the eigenproblem might be sought via approximation methods similar to~\cite{klaus18}.}
A second limitation is in our map refinement scheme, which has limited resilience to particularly noisy input. We presented our algorithm as an illustrative tool for map denoising, but more effective schemes operating on the product manifold are likely possible.

\noindent\textbf{Future work.}
%
%
From an investigative standpoint, it might be worth considering a notion of optimal transport {\em between maps} as a means of exploring the space of maps between given shapes, a natural choice given our modeling of maps as measures on a manifold. Related constructions could extend distortion measures like the Dirichlet energy~\cite{brenier2003extended,solomon2013dirichlet,lavenant2017harmonic} to the functional regime.

Another particularly interesting direction will be to consider general {\em graphs} (as opposed to manifolds) and their products in the context of network analysis, machine learning, and applications. While many of our results may be directly translated to graphs, the lack of differentiable structure poses new theoretical challenges and at the same time provides a richer spectrum of possibilities; for example, several different notions of product exist between graphs~\cite{hammack11}.

Finally, a promising direction is the introduction of product spaces within geometric deep learning~\cite{gdl} pipelines, where the data is in the form of signals defined on top of a manifold. Our proposed discretization of the (product) Laplace-Beltrami operator, as well as its spectral decomposition, can be directly employed in such pipelines, enabling new forms of structured prediction in a range of challenging problems in vision and graphics.

{\small \section*{Acknowledgments}
\vspace{-0.1cm}
We gratefully acknowledge Mathieu Andreux, Matthias Vestner, Michael Moeller, Maks Ovsjanikov, and Paolo Rodol\`a for fruitful discussions. ER is supported by the ERC grant no.\ 802554 (SPECGEO). AB is supported by the ERC grant no.\ 335491 (RAPID). MB is supported in part by the ERC grant no.\ 724228 (LEMAN), Royal Society Wolfson Research Merit Award, Google Research Faculty Awards, Amazon AWS Machine Learning Research Award, and the Rudolf Diesel fellowship at the Institute for Advanced Studies, TU Munich. 
JS acknowledges the generous support of Army Research Office grant W911NF-12-R-0011 (``Smooth Modeling of Flows on Graphs''), of National Science Foundation grant IIS-1838071 (``BIGDATA:F:Statistical and Computational Optimal Transport for Geometric Data Analysis''), from the MIT Research Support Committee, from an Amazon Research Award, from the MIT--IBM Watson AI Laboratory, and from the Skoltech--MIT Next Generation Program.}

\appendix
\section{Proofs}\label{sec:proofs}
We provide proofs for the main propositions of the paper.

\noindent\textbf{Proof of Theorem~\ref{thm:LB1D}.}
Following standard FEM, we discretize 
the Poisson equation $\Delta_{\M\times\N}f=g$ via 
the weak formulation
\begin{equation}\label{eq:fem}
\langle \Delta_{\M\times\N} f, H_j \rangle = \langle g, H_j\rangle\,,
\end{equation}
where functions are expressed in the hat basis $\{H_j:\M\times\N\to\mathbb{R}\}$, and are thus approximated piecewise-linearly via the expansion
$
f(x) \approx \sum_{i=1}^n f(v_i) h_i(x)
$.
The left-hand side of \eqref{eq:fem} can be written as
\begin{align}
\langle \Delta f, H_j\rangle 
= -\langle \nabla f, \nabla H_j \rangle 
=- \sum_i f(v_i) \underbrace{\langle \nabla H_i, \nabla H_j \rangle}_{w_{ij}}\,,
\end{align}
where $w_{ij}$ are elements of the {\em stiffness matrix} $\mathbf{W}$. The right-hand side of \eqref{eq:fem} can be written as
\begin{align}
\langle g, H_j \rangle = \langle \sum_i g(v_i) H_i(x) , H_j \rangle = \sum_i g(v_i) \underbrace{\langle H_i, H_j \rangle}_{s_{ij}}\,,
\end{align}
where $s_{ij}$ are elements of the {\em mass matrix} $\mathbf{S}$.

The Cartesian product of the two graphs discretizing $\M$ and $\N$ has grid topology, as illustrated in Figure~\ref{fig:quad}, and the resulting bilinear hat basis functions are expressed via the outer product $H_e=h_j \wedge h_q$. We can then compute the mass values (refer to the Figure for the color notation):
\begin{fleqn}
\begin{align}
{s_{ee}} &= {\color{gray}\langle {H_e}, {H_e} \rangle} = {\color{gray}\langle} {\color{red}h_j}\wedge {\color{blue}h_q} , {\color{red}h_j} \wedge{\color{blue}h_q} {\color{gray}\rangle}\label{eq:sfirst}\nonumber\\
&= {\color{gray}\int_{Q_{abde}\cup Q_{bcef}\cup Q_{degh}\cup Q_{efhi}}} {\color{red}h_j(x)}{\color{blue}h_q(y)}{\color{red}h_j(x)}{\color{blue}h_q(y)} {\color{red}dx} {\color{blue}dy}\nonumber\\
&={\color{red}\int_{E_{ijk}} h_j(x) h_j(x) dx} {\color{blue}\int_{E_{pqr}} h_q(y) h_q(y) dy}\nonumber\\
&={\color{red}s_{jj}} {\color{blue}s_{qq}}
\end{align}
\end{fleqn}
\begin{fleqn}
\begin{align}
{s_{ae}} &= {\color{gray}\langle {H_a} , {H_e} \rangle}={\color{gray}\langle} {\color{red}h_i}\wedge {\color{blue}h_r} , {\color{red}h_j} \wedge{\color{blue}h_q} {\color{gray}\rangle}\nonumber\\
&={\color{gray}\int_{Q_{abde}}} {\color{red}h_i(x)} {\color{blue}h_r(y)} {\color{red}h_j(x)} {\color{blue}h_q(y)} {\color{red}dx} {\color{blue}dy}\nonumber\\
&={\color{red}\int_{E_{ij}} h_i(x) h_j(x) dx} {\color{blue}\int_{E_{qr}} h_r(y) h_q(y) dy}\nonumber\\
&={\color{red}s_{ij}} {\color{blue}s_{qr}}
\end{align}
\end{fleqn}
\begin{fleqn}
\begin{align}
{s_{de}} &= {\color{gray}\langle {H_{d}}, {H_e}\rangle}={\color{gray}\langle} {\color{red}h_i} \wedge{\color{blue}h_q} , {\color{red}h_j}\wedge {\color{blue}h_q} {\color{gray}\rangle}\nonumber\\
&={\color{gray}\int_{Q_{abde}\cup Q_{degh}}} {\color{red}h_i(x)} {\color{blue}h_q(y)} {\color{red}h_j(x)} {\color{blue}h_q(y)} {\color{red}dx} {\color{blue}dy}\nonumber\\
&={\color{red}\int_{E_{ij}} h_i(x) h_j(x) dx} {\color{blue}\int_{E_{pqr}} h_q(y) h_q(y) dy}\nonumber\\
&={\color{red}s_{ij}} {\color{blue}s_{qq}}
\end{align}
\end{fleqn}

Similarly, the stiffness integrals read:
\begin{fleqn}
\begin{align}
{w_{ee}} &= {\color{gray}\langle {\nabla H_e}, {\nabla H_e} \rangle}
={\color{gray}\langle \nabla} {\color{red}h_j} \wedge{\color{blue}h_q} , {\color{gray}\nabla} {\color{red}h_j}\wedge {\color{blue}h_q} {\color{gray}\rangle}\nonumber\\
&={\color{gray}\langle} {\color{red}\nabla h_j}{\color{blue}h_q}  , {\color{red}\nabla h_j}{\color{blue}h_q} {\color{gray}\rangle}+2{\color{gray}\langle} {\color{red}h_j}{\color{blue}\nabla h_q} , {\color{red}\nabla h_j}{\color{blue}h_q} {\color{gray}\rangle} + {\color{gray}\langle} {\color{red}h_j}{\color{blue}\nabla h_q} , {\color{red}h_j}{\color{blue}\nabla h_q} {\color{gray}\rangle}\nonumber\\
&={\color{gray}\int_{Q_{abde}\cup Q_{bcef}\cup Q_{degh}\cup Q_{efhi}}} \langle {\color{red}\nabla h_j(x)}{\color{blue}h_q(y)}  , {\color{red}\nabla h_j(x)}{\color{blue}h_q(y)} \rangle {\color{red}dx} {\color{blue}dy} + \cdots \nonumber\\
&={\color{gray}\int_{Q_{abde}\cup Q_{bcef}\cup Q_{degh}\cup Q_{efhi}}} {\color{blue}h_q(y)}{\color{blue}h_q(y)} \langle {\color{red}\nabla h_j(x)} , {\color{red}\nabla h_j(x)} \rangle {\color{red}dx} {\color{blue}dy} + \cdots \nonumber\\
&={\color{red}\int_{E_{ijk}} \langle {\nabla h_j(x)} , {\nabla h_j(x)} \rangle dx}{\color{blue}\int_{E_{pqr}} h_q(y) h_q(y) dy}  + \cdots + \cdots\nonumber\\
&={\color{red}w_{jj}}{\color{blue}s_{qq}}+ {\color{red}s_{jj}}{\color{blue}w_{qq}}
\end{align}
\end{fleqn}
\begin{fleqn}
\begin{align}
{w_{ae}}&= {\color{gray}\langle {\nabla H_a}, {\nabla H_e} \rangle}
={\color{gray}\langle \nabla} {\color{red}h_i} \wedge{\color{blue}h_r} , {\color{gray}\nabla} {\color{red}h_j} \wedge{\color{blue}h_q} {\color{gray}\rangle}\nonumber\\
&={\color{gray}\langle} {\color{red}\nabla h_i}{\color{blue}h_r} , {\color{red}\nabla h_j}{\color{blue}h_q} {\color{gray}\rangle} +{\color{gray}\langle} {\color{red}h_i}{\color{blue}\nabla h_r} ,{\color{red}h_j}{\color{blue}\nabla h_q} {\color{gray}\rangle}\nonumber\\
&={\color{red}w_{ij}}{\color{blue}s_{qr}} + {\color{red}s_{ij}}{\color{blue}w_{qr}}
\end{align}
\end{fleqn}
\begin{fleqn}
\begin{align}
{w_{de}}&= {\color{gray}\langle {\nabla H_d}, {\nabla H_e} \rangle}
={\color{gray}\langle \nabla} {\color{red}h_i} \wedge{\color{blue}h_q} , {\color{gray}\nabla} {\color{red}h_j}\wedge {\color{blue}h_q} {\color{gray}\rangle}\nonumber\\
&={\color{gray}\langle} {\color{red}\nabla h_i}{\color{blue}h_q} , {\color{red}\nabla h_j}{\color{blue}h_q} {\color{gray}\rangle} +{\color{gray}\langle} {\color{red}h_i}{\color{blue}\nabla h_q} ,{\color{red}h_j}{\color{blue}\nabla h_q} {\color{gray}\rangle}\nonumber\\
&={\color{red}w_{ij}}{\color{blue}s_{qq}}+{\color{red}s_{ij}}{\color{blue}w_{qq}}\label{eq:wlast}
\end{align}
\end{fleqn}
where we applied the outer product rule for the gradient operator, and used the fact that $\langle {\color{red}\nabla f},{\color{blue} \nabla g} \rangle = 0$ for any pair of functions on the two cycle graphs. Note the integrals $s_{ae}$ and $w_{ae}$ are non-zero even if nodes $a$ and $e$ are not connected in the product graph.

In matrix notation, formulas \eqref{eq:sfirst}-\eqref{eq:wlast} can be succinctly written as:
\begin{align}
{\color{gray}\mathbf{S}} &= {\color{red}\mathbf{S}}\otimes{\color{blue}\mathbf{S}}\nonumber\\
{\color{gray}\mathbf{W}} &= {\color{red}\mathbf{W}}\otimes{\color{blue}\mathbf{S}} + {\color{red}\mathbf{S}}\otimes{\color{blue}\mathbf{W}}\,,\nonumber
\end{align}
completing the proof. $\square$

\begin{figure}[tb]
\centering
\begin{overpic}
[trim=0cm 0cm 0cm 0cm,clip,width=0.35\linewidth]{./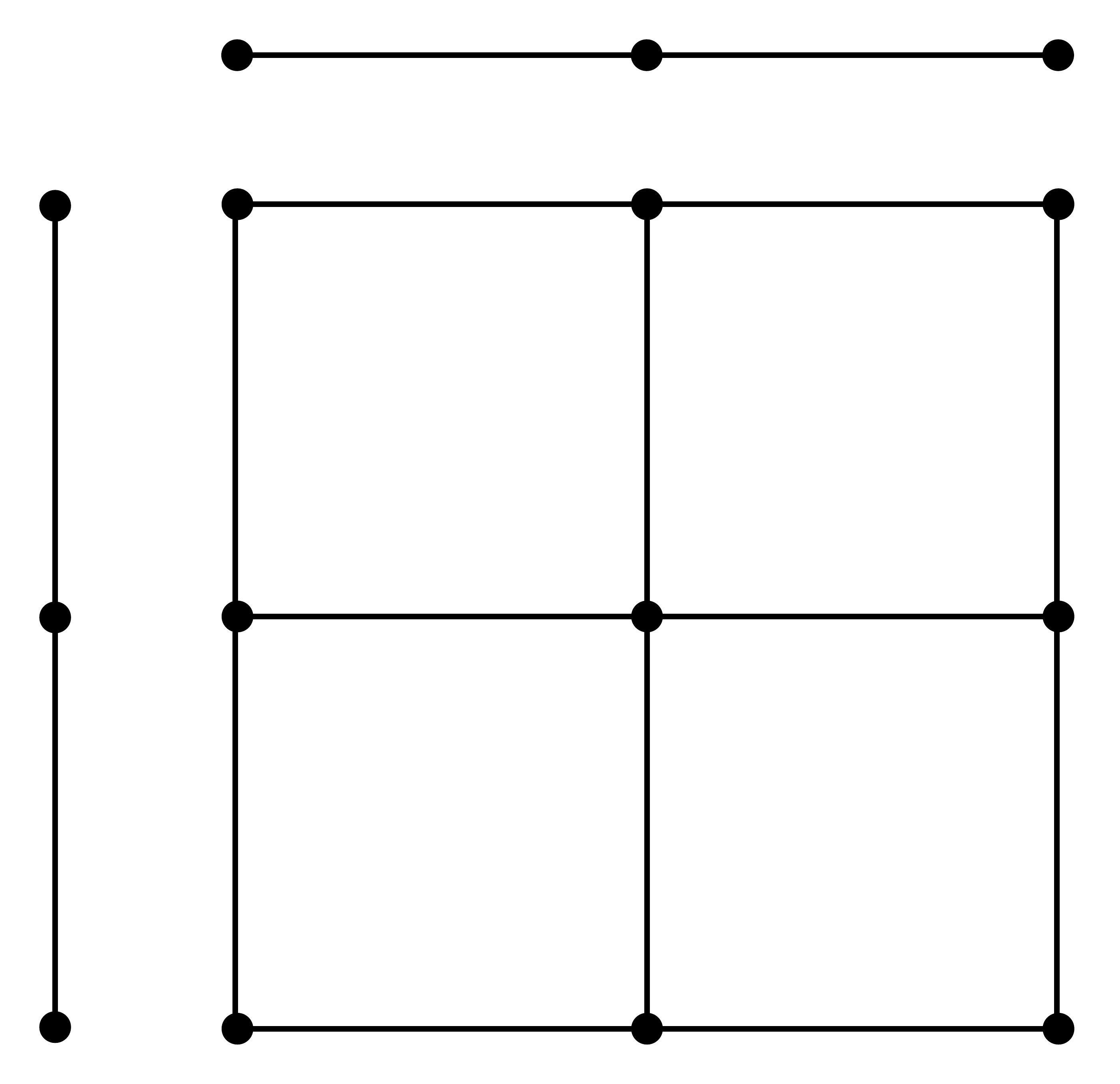}
\put(18,80.5){\color{gray}$a$}
\put(53,80.5){\color{gray}$b$}
\put(93,80.5){\color{gray}$c$}
\put(18,44){\color{gray}$d$}
\put(53,44){\color{gray}$e$}
\put(93,44){\color{gray}$f$}
\put(18,-4){\color{gray}$g$}
\put(53,-4){\color{gray}$h$}
\put(93,-4){\color{gray}$i$}
\put(-4, 42){\color{blue}$q$}
\put(-4, 78){\color{blue}$r$}
\put(-4, 3){\color{blue}$p$}
\put(18, 96){\color{red}$i$}
\put(53, 96){\color{red}$j$}
\put(93, 96){\color{red}$k$}
\put(97, 88.5){$\cdots$}
\put(7, 88.5){$\cdots$}
\put(4, 81){\footnotesize $\vdots$}
\put(4, -8){\footnotesize $\vdots$}
\end{overpic}
\hspace{0.45cm}
\begin{overpic}
[trim=0cm 0cm 0cm 0cm,clip,width=0.57\linewidth]{./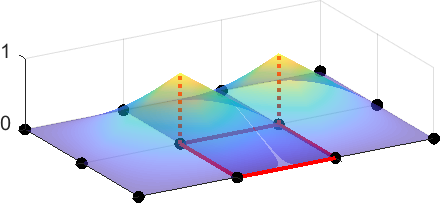}
\put(38,31){$H_e$}
\put(60,36){$H_f$}
\put(39,8){$e$}
\put(53,0){$h$}
\put(74,4){$i$}
\put(62,11){$f$}
\end{overpic}
\caption{\label{fig:quad}{\em Left}: The product of two closed contours discretized as cycle graphs (in blue and red) is a quad mesh with toric topology (in grey). Uniform edge lengths are used for illustration purposes. {\em Right}: Two overlapping bilinear hats $H_e$ and $H_f$. On the quad element $Q_{efhi}$ (marked in red) there is non-zero overlap, hence it contributes to the computation of mass and stiffness values.}
\end{figure}

\noindent\textbf{Proof of Corollary~\ref{thm:L}.}
The proof is straightforward and follows from substituting the expressions \eqref{eq:sprod}, \eqref{eq:wprod} into the general formula $\L=\S^{-1}\W$:
\begin{align}
\mathbf{L}_{\M\times\N} &= {\mathbf{S}}^{-1}_{\M\times\N} {\mathbf{W}_{\M\times\N}}\nonumber\\
&=(\S_\M\otimes\S_\N)^{-1}( \W_\M\otimes\S_\N + \S_\M\otimes\W_\N)\nonumber\\
&=(\S_\M^{-1}\otimes\S_\N^{-1})( \W_\M\otimes\S_\N)+(\S_\M^{-1}\otimes\S_\N^{-1})( \S_\M\otimes\W_\N)\nonumber\\
&=(\S_\M^{-1}\W_\M)\otimes(\S_\N^{-1}\S_\N)+(\S_\M^{-1}\S_\M)\otimes(\S_\N^{-1}\W_\N)\nonumber\\
&={\mathbf{L}_\M}\otimes\mathbf{I}_\N + \mathbf{I}_\M\otimes{\mathbf{L}_\N}\,.
\hspace{3.6cm}\square\nonumber
\end{align}

\noindent\textbf{Proof of Corollary~\ref{thm:LB2D}.}
Since triangular (3-3) duoprisms are, by definition, the Cartesian product of two triangles, we can define a multilinear basis function on the product complex as the outer product of two standard hats defined on triangle meshes. We are now in the same setting as the lower dimensional case, and in particular Equations~\eqref{eq:sfirst}-\eqref{eq:wlast} remain valid. $\square$

\bibliographystyle{eg-alpha-doi}
\bibliography{egbib}

\newcommand{\etalchar}[1]{$^{#1}$}
\begin{thebibliography}{\uppercase{MCSK{\etalchar{*}}17}}

\bibitem[ADK16]{aflalo2016spectral}
\textsc{Aflalo Y., Dubrovina A., Kimmel R.}:
\newblock Spectral generalized multi-dimensional scaling.
\newblock \emph{IJCV 118}, 3 (2016), 380--392.

\bibitem[BBL{\etalchar{*}}17]{gdl}
\textsc{Bronstein M.~M., Bruna J., LeCun Y., Szlam A., Vandergheynst P.}:
\newblock Geometric deep learning: Going beyond euclidean data.
\newblock \emph{IEEE Signal Processing Magazine 34}, 4 (July 2017), 18--42.

\bibitem[Ber98]{bertsekas98}
\textsc{Bertsekas D.~P.}:
\newblock \emph{Network Optimization: Continuous and Discrete Models}.
\newblock Athena Scientific, 1998.

\bibitem[BGM71]{berger71}
\textsc{Berger M., Gauduchon P., Mazet E.}:
\newblock \emph{Le spectre d'une vari\'{e}t\'{e} {R}iemannienne}.
\newblock Lecture notes in mathematics. Springer-Verlag, 1971.

\bibitem[Bre03]{brenier2003extended}
\textsc{Brenier Y.}:
\newblock Extended monge-kantorovich theory.
\newblock In \emph{Optimal transportation and applications}. Springer, 2003,
  pp.~91--121.

\bibitem[Cha84]{chavel84}
\textsc{Chavel I.}:
\newblock \emph{Eigenvalues in Riemannian geometry}, second~ed.
\newblock Academic Press, 1984.

\bibitem[Cox48]{polytopes}
\textsc{Coxeter H. S.~M.}:
\newblock \emph{Regular Polytopes}.
\newblock Dover Publications, 1948.

\bibitem[CRA{\etalchar{*}}17]{multiway}
\textsc{Cosmo L., Rodol\`a E., Albarelli A., M\'emoli F., Cremers D.}:
\newblock Consistent partial matching of shape collections via sparse modeling.
\newblock \emph{Computer Graphics Forum 36}, 1 (2017), 209--221.

\bibitem[CRM{\etalchar{*}}16]{cosmo2016matching}
\textsc{Cosmo L., Rodol\`a E., Masci J., Torsello A., Bronstein M.~M.}:
\newblock Matching deformable objects in clutter.
\newblock In \emph{Proceedings - 2016 4th International Conference on 3D
  Vision, 3DV 2016} (Stanford, 2016), IEEE, pp.~1--10.

\bibitem[CSBK17]{choukroun2016elliptic}
\textsc{Choukroun Y., Shtern A., Bronstein A., Kimmel R.}:
\newblock Hamiltonian operator for spectral shape analysis.
\newblock \emph{arXiv:1611.01990v2} (2017).

\bibitem[Duf59]{duffin1959distributed}
\textsc{Duffin R.~J.}:
\newblock Distributed and lumped networks.
\newblock \emph{Journal of Mathematics and Mechanics 8}, 5 (1959), 793--826.

\bibitem[EBC17]{ezuz2017deblurring}
\textsc{Ezuz D., Ben-Chen M.}:
\newblock Deblurring and denoising of maps between shapes.
\newblock \emph{Computer Graphics Forum 36}, 5 (2017), 165--174.

\bibitem[ERGB16]{eynard2016coupled}
\textsc{Eynard D., Rodol\`a E., Glashoff K., Bronstein M.~M.}:
\newblock Coupled functional maps.
\newblock In \emph{Proceedings - 2016 4th International Conference on 3D
  Vision, 3DV 2016} (Stanford, 2016), IEEE, pp.~399--407.

\bibitem[Fie73]{fiedler73}
\textsc{Fiedler M.}:
\newblock Algebraic connectivity of graphs.
\newblock \emph{Czechoslovak Math. J. 23}, 98 (1973), 298--305.

\bibitem[GO17]{glashoff17}
\textsc{Glashoff K., Ortlieb C.~P.}:
\newblock Composition operators, matrix representation, and the finite section
  method: A theoretical framework for maps between shapes.
\newblock \emph{arXiv:1705.00325} (2017).

\bibitem[GP10]{guillemin2010differential}
\textsc{Guillemin V., Pollack A.}:
\newblock \emph{Differential Topology}.
\newblock AMS Chelsea Publishing Series. American Mathematical Soc., 2010.

\bibitem[GRS10]{grochenig2010convergence}
\textsc{Gr{\"o}chenig K., Rzeszotnik Z., Strohmer T.}:
\newblock Convergence analysis of the finite section method and banach algebras
  of matrices.
\newblock \emph{Integral Equations and Operator Theory 67}, 2 (2010), 183--202.

\bibitem[HIK11]{hammack11}
\textsc{Hammack R., Imrich W., Klavzar S.}:
\newblock \emph{Handbook of product graphs}, second~ed.
\newblock CRC Press, 2011.

\bibitem[HO17]{huang2017adjoint}
\textsc{Huang R., Ovsjanikov M.}:
\newblock Adjoint map representation for shape analysis and matching.
\newblock \emph{Computer Graphics Forum 36}, 5 (2017), 151--163.

\bibitem[HWG14]{huang2014functional}
\textsc{Huang Q., Wang F., Guibas L.}:
\newblock Functional map networks for analyzing and exploring large shape
  collections.
\newblock \emph{ACM Transactions on Graphics (TOG) 33}, 4 (2014), 36.

\bibitem[KBB{\etalchar{*}}13]{kovnatsky2013coupled}
\textsc{Kovnatsky A., Bronstein M.~M., Bronstein A.~M., Glashoff K., Kimmel
  R.}:
\newblock Coupled quasi-harmonic bases.
\newblock \emph{Computer Graphics Forum 32}, 2 (2013), 439--448.

\bibitem[KBBV15]{kovnatsky2015functional}
\textsc{Kovnatsky A., Bronstein M.~M., Bresson X., Vandergheynst P.}:
\newblock Functional correspondence by matrix completion.
\newblock In \emph{Proc. CVPR} (Boston, 2015), IEEE, pp.~905--914.

\bibitem[KGB16]{kovnatsky2016madmm}
\textsc{Kovnatsky A., Glashoff K., Bronstein M.~M.}:
\newblock {MADMM}: a generic algorithm for non-smooth optimization on
  manifolds.
\newblock In \emph{Proc. ECCV} (Amsterdam, 2016), Springer.

\bibitem[Lav17]{lavenant2017harmonic}
\textsc{Lavenant H.}:
\newblock Harmonic mappings valued in the {W}asserstein space.
\newblock \emph{arXiv:1712.07528} (2017).

\bibitem[LRB{\etalchar{*}}16]{litany2016non}
\textsc{Litany O., Rodol{\`a} E., Bronstein A.~M., Bronstein M.~M., Cremers
  D.}:
\newblock Non-rigid puzzles.
\newblock \emph{Computer Graphics Forum 35}, 5 (2016), 135--143.

\bibitem[LRR{\etalchar{*}}17]{litany2017deep}
\textsc{Litany O., Remez T., Rodol\`a E., Bronstein A.~M., Bronstein M.~M.}:
\newblock Deep functional maps: Structured prediction for dense shape
  correspondence.
\newblock In \emph{Proceedings of the IEEE International Conference on Computer
  Vision} (Venice, 2017), vol.~2, IEEE, pp.~5660--5668.

\bibitem[LRS{\etalchar{*}}16]{lahner2016efficient}
\textsc{L{\"a}hner Z., Rodol\`a E., Schmidt F.~R., Bronstein M.~M., Cremers
  D.}:
\newblock Efficient globally optimal {2D-to-3D} deformable shape matching.
\newblock In \emph{Proceedings of the IEEE Computer Society Conference on
  Computer Vision and Pattern Recognition} (Las Vegas, 2016), IEEE,
  pp.~2185--2193.

\bibitem[MCSK{\etalchar{*}}17]{mandad2017variance}
\textsc{Mandad M., Cohen-Steiner D., Kobbelt L., Alliez P., Desbrun M.}:
\newblock Variance-minimizing transport plans for inter-surface mapping.
\newblock \emph{ACM Transactions on Graphics 36}, 4 (2017), 39:1--39:14.

\bibitem[M{\'e}m11]{memoli2011gromov}
\textsc{M{\'e}moli F.}:
\newblock {G}romov--{W}asserstein {D}istances and the {M}etric {A}pproach to
  {O}bject {M}atching.
\newblock \emph{Foundations of computational mathematics 11}, 4 (2011),
  417--487.

\bibitem[MRCB18]{lmh}
\textsc{Melzi S., Rodol\`a E., Castellani U., Bronstein M.}:
\newblock Localized manifold harmonics for spectral shape analysis.
\newblock \emph{Computer Graphics Forum 37}, 6 (2018), 20--34.

\bibitem[NBH18]{klaus18}
\textsc{Nasikun A., Brandt C., Hildebrandt K.}:
\newblock Fast approximation of laplace--beltrami eigenproblems.
\newblock \emph{Computer Graphics Forum 37}, 5 (2018), 121--134.

\bibitem[NMR{\etalchar{*}}18]{nognengimproved}
\textsc{Nogneng D., Melzi S., Rodol{\`a} E., Castellani U., Bronstein M.,
  Ovsjanikov M.}:
\newblock Improved functional mappings via product preservation.
\newblock \emph{Computer Graphics Forum 37}, 2 (2018), 179--190.

\bibitem[NO17]{commutativity}
\textsc{Nogneng D., Ovsjanikov M.}:
\newblock Informative descriptor preservation via commutativity for shape
  matching.
\newblock \emph{Computer Graphics Forum 36}, 2 (2017), 259--267.

\bibitem[OBCS{\etalchar{*}}12]{ovsjanikov12}
\textsc{Ovsjanikov M., Ben-Chen M., Solomon J., Butscher A., Guibas L.}:
\newblock Functional maps: a flexible representation of maps between shapes.
\newblock \emph{ACM Trans. Graph. 31}, 4 (July 2012), 30:1--30:11.

\bibitem[OCB{\etalchar{*}}17]{ovsjanikov2016computing}
\textsc{Ovsjanikov M., Corman E., Bronstein M., Rodol\`{a} E., Ben-Chen M.,
  Guibas L., Chazal F., Bronstein A.}:
\newblock Computing and processing correspondences with functional maps.
\newblock In \emph{ACM SIGGRAPH 2017 Courses} (2017), pp.~5:1--5:62.

\bibitem[PBB{\etalchar{*}}13]{pokrass2013sparse}
\textsc{Pokrass J., Bronstein A.~M., Bronstein M.~M., Sprechmann P., Sapiro
  G.}:
\newblock Sparse modeling of intrinsic correspondences.
\newblock \emph{Computer Graphics Forum 32}, 2 (2013), 459--468.

\bibitem[PM90]{peronamalik90}
\textsc{Perona P., Malik J.}:
\newblock Scale-space and edge detection using anisotropic diffusion.
\newblock \emph{IEEE Trans. Pattern Anal. Mach. Intell. 12}, 7 (July 1990),
  629--639.

\bibitem[RCB{\etalchar{*}}17]{rodola2017partial}
\textsc{Rodol{\`a} E., Cosmo L., Bronstein M.~M., Torsello A., Cremers D.}:
\newblock Partial functional correspondence.
\newblock \emph{Computer Graphics Forum 36}, 1 (2017), 222--236.

\bibitem[Sch83]{schlegeldiagram}
\textsc{Schlegel V.}:
\newblock \emph{Theorie der homogen zusammengesetzten {R}aumgebilde}.
\newblock Nova Acta, Ksl. Leop.-Carol. Deutsche Akademie der Naturforscher,
  Band XLIV, Nr. 4, Druck von E. Blochmann und Sohn, Dresden, 1883.

\bibitem[SGB13]{solomon2013dirichlet}
\textsc{Solomon J., Guibas L., Butscher A.}:
\newblock Dirichlet energy for analysis and synthesis of soft maps.
\newblock \emph{Computer Graphics Forum 32}, 5 (2013), 197--206.

\bibitem[SK17]{shamai2016geodesic}
\textsc{Shamai G., Kimmel R.}:
\newblock Geodesic distance descriptors.
\newblock In \emph{Proc. CVPR} (Honolulu, 2017), IEEE.

\bibitem[SNB{\etalchar{*}}12]{solomon2012soft}
\textsc{Solomon J., Nguyen A., Butscher A., Ben-Chen M., Guibas L.}:
\newblock Soft maps between surfaces.
\newblock \emph{Computer Graphics Forum 31}, 5 (2012), 1617--1626.

\bibitem[SPKS16]{solomon2016entropic}
\textsc{Solomon J., Peyr{\'e} G., Kim V.~G., Sra S.}:
\newblock Entropic metric alignment for correspondence problems.
\newblock \emph{ACM Transactions on Graphics (TOG) 35}, 4 (2016), 72:1--72:13.

\bibitem[Tu11]{tu11}
\textsc{Tu L.~W.}:
\newblock \emph{An Introduction to Manifolds}, second~ed.
\newblock Springer-Verlag New York, 2011.

\bibitem[VLB{\etalchar{*}}17]{vestnerefficient}
\textsc{Vestner M., L{\"a}hner Z., Boyarski A., Litany O., Slossberg R., Remez
  T., Rodol\`a E., Bronstein A., Bronstein M., Kimmel R., Cremers D.}:
\newblock Efficient deformable shape correspondence via kernel matching.
\newblock In \emph{Proceedings - 2017 International Conference on 3D Vision,
  3DV 2017} (Qingdao, 2017), IEEE, pp.~517--526.

\bibitem[VLR{\etalchar{*}}17]{vestner2017product}
\textsc{Vestner M., Litman R., Rodol{\`a} E., Bronstein A., Cremers D.}:
\newblock Product manifold filter: Non-rigid shape correspondence via kernel
  density estimation in the product space.
\newblock In \emph{Proceedings - 30th IEEE Conference on Computer Vision and
  Pattern Recognition, CVPR 2017} (Honolulu, 2017), IEEE, pp.~6681--6690.

\bibitem[WGBS18]{wang2018kernel}
\textsc{Wang L., Gehre A., Bronstein M.~M., Solomon J.}:
\newblock Kernel functional maps.
\newblock \emph{Computer Graphics Forum 37}, 5 (2018), 27--36.

\bibitem[Wit83]{scalespace83}
\textsc{Witkin A.~P.}:
\newblock Scale-space filtering.
\newblock In \emph{Proceedings of the Eighth International Joint Conference on
  Artificial Intelligence - Volume 2} (San Francisco, CA, USA, 1983), IJCAI'83,
  Morgan Kaufmann Publishers Inc., pp.~1019--1022.

\bibitem[WSSC11]{windheuser2011geometrically}
\textsc{Windheuser T., Schlickewei U., Schmidt F.~R., Cremers D.}:
\newblock Geometrically consistent elastic matching of 3d shapes: A linear
  programming solution.
\newblock In \emph{Proc. ICCV} (Barcelona, 2011), IEEE.

\bibitem[ZWW{\etalchar{*}}10]{zeng2010dense}
\textsc{Zeng Y., Wang C., Wang Y., Gu X., Samaras D., Paragios N.}:
\newblock Dense non-rigid surface registration using high-order graph matching.
\newblock In \emph{Proc. CVPR} (San Francisco, 2010), IEEE.

\end{thebibliography}

\end{document}